\begin{document}

\title{ Generation of random chordal graphs using subtrees of a tree\thanks{A preliminary version of this paper has appeared in the Proceedings of the 10th International Conference on Algorithms and Complexity, CIAC 2017 \cite{CIAC2017}.} \thanks{This work is supported by the Research Council of Norway and Bo\u{g}azi\c{c}i University Research Fund (grant 11765); and T. Ekim is supported by Turkish Academy of Sciences GEBIP award.}}

\author{
	Oylum \c{S}eker\inst{1}
	\and
	Pinar Heggernes\inst{2}
	\and
	T{\i}naz Ekim\inst{1}
	\and Z.~Caner Ta\c{s}k{\i}n\inst{1}
}

\institute{
	Department of Industrial Engineering, Bo\u{g}azi\c{c}i University, Istanbul, Turkey.
	\texttt{\{oylum.seker,tinaz.ekim,caner.taskin\}@boun.edu.tr}
	\and
	Department of Informatics, University of Bergen, Norway. 
	\texttt{pinar.heggernes@uib.no}
}

\maketitle

\begin{abstract}
Chordal graphs form one of the most studied graph classes. Several graph problems that are NP-hard in general become solvable in polynomial time on chordal graphs, whereas many others remain NP-hard. For a large group of problems among the latter, approximation algorithms, parameterized algorithms, and algorithms with moderately exponential or sub-exponential running time have been designed. Chordal graphs have also gained increasing interest during the recent years in the area of enumeration algorithms. Being able to test these algorithms on instances of chordal graphs is crucial for understanding the concepts of tractability of hard problems on graph classes. Unfortunately, only few published papers give algorithms for generating chordal graphs. Even in these papers, only very few methods aim for generating a large variety of chordal graphs. Surprisingly, none of these methods is directly based on the ``intersection of subtrees of a tree'' characterization of chordal graphs. In this paper, we give an algorithm for generating chordal graphs, based on the characterization that a graph is chordal if and only if it is the intersection graph of subtrees of a tree. Upon generating a random host tree, we give and test various methods that generate subtrees of the host tree. We compare our methods to one another and to existing ones for generating chordal graphs. Our experiments show that one of our methods is able to generate the largest variety of chordal graphs in terms of maximal clique sizes. Moreover, two of our subtree generation methods result in an overall complexity of our generation algorithm that is the best possible time complexity for a method generating the entire node set of subtrees in a ``intersection of subtrees of a tree'' representation. The instances corresponding to the results presented in this paper, and also a set of relatively small-sized instances are made available online.  

\end{abstract}

\textit{Keywords}: Random chordal graph generation; Intersection of subtrees of a tree

\section{Introduction}\label{chapter:intro}

Algorithms particularly tailored to exploit properties of various graph classes have formed an increasingly important area of graph algorithms during the last five decades. With the introduction of relatively new 
theories for coping with NP-hard problems, like parameterized algorithms, algorithmic research on graph classes has become even more popular recently, and the number of results in this area appearing at international conferences and journals is now higher than ever.  One of the most studied graph classes in this context is the class of chordal graphs, i.e., graphs that contain no induced cycle of length 4 or more. Chordal graphs arise in practical applications from a wide variety of unrelated fields, like sparse matrix computations, database management, perfect phylogeny, VLSI, computer vision, knowledge based systems, and Bayesian networks \cite{BLS99,Gol04,pearl2014probabilistic,rose1972graph,Spinrad03}. This graph class that first appeared in the literature as early as 1958 \cite{Hajnal}, has steadily increased its popularity, and there are now more than 20 thousand references on chordal graphs according to Google Scholar. 

With a large number of existing algorithms specially tailored for chordal graphs, it is interesting to note that not much has been done to test these algorithms in practice. Very few such tests are available as published articles \cite{andreou2005generating,markenzon2008two,pemmaraju2005approximating}. In particular, there seems to be no efficient chordal graph generator available that is capable of producing every chordal graph. Most of the work in this direction involves generating chordal graphs tailored to test a particular algorithm or result \cite{andreou2005generating,pemmaraju2005approximating}. This is a clear shortcoming for the field, and it was even mentioned as an important open task at a Dagstuhl Seminar \cite{Dagstuhl}.  Until some years ago, most of the algorithms tailored for chordal graphs had polynomial running time, and testing was perhaps not crucial. Now, however, many parameterized and exponential-time algorithms exist for chordal graphs, for problems that remain hard on this graph 
class, see e.g., \cite{Sofsem2014,Pinar,DMarx,Neel}. The proven running times of such algorithms might often be too high compared to the practical running time. Just to give some examples from the field of enumeration, there are now several algorithms and upper bounds on the maximum number of various objects in chordal graphs \cite{Pinar1,Pinar2,Pinar}. However, the lower bound examples at hand usually do not match these upper bounds. Tests on random chordal graphs is a good way of getting better insight about whether the known upper bounds are too high or tight. 

In this paper, we present an algorithm for generating random chordal graphs. Our algorithm is based on the characterization that a graph is chordal if and only if it is the intersection graph of subtrees of a tree. Surprisingly, this characterization does not seem to have been directly used for random chordal graph generation earlier. Starting from a random host tree, we propose three different methods for generating subtrees of the host tree to give different neighborhood and density properties. Our algorithm, with two of these methods, can be implemented in such a way that the overall running time is best possible for an algorithm producing the entire node set of subtrees in a ``intersection of subtrees '' representation of a chordal graph. One of these fast subtree generation methods, which we call GrowingSubtree, is also the method that turns out to generate the largest variety of chordal graphs. We measure the variety using the characteristics of the maximal cliques of the generated graph, as it has been done in previous work \cite{pemmaraju2005approximating}. After proving the correctness,
we give extensive computational tests to demonstrate the kind of chordal graphs that our algorithm generates using each of the different subtree generation methods. We compare our methods with one another and with existing test results; we also implement one of the earlier proposed methods and include this in our tests. Note that {\sc Graph Isomorphism} is as hard on chordal graphs as on general graphs \cite{lueker1979linear}, which adds to the difficulty of producing chordal graphs uniformly at random. Our algorithm is able to generate every chordal graph with positive probability.

\section{Background, terminology and existing algorithms}
\label{chapter:literature}
In this section we give the necessary background on chordal graphs, as well as a short review of the existing algorithms for chordal graph generation. We work with simple and undirected graphs, and we use standard graph terminology. 
In particular, for a given graph $G$, we denote its vertex set by $V(G)$, and edge set by $E(G)$. We let $n=|V(G)|$ and $m=|E(G)|$. The set of {\it neighbors}, or the {\it neighborhood}, of a vertex is the set of vertices adjacent to it. The size of the neighborhood of a vertex $x$ is the {\it degree} of $x$, denoted by $d(x)$. The neighborhood of a set $X$ of vertices is the union of the neighborhoods of the vertices in $X$, excluding $X$ itself. 

Let $F=\{S_1, S_2, \ldots, S_n\}$ be a family of sets from the same universe. A graph $G$ is called an {\it intersection graph of} $F$ if there is a bijection between the set of vertices $\{v_1, v_2, \ldots, v_n\}$ of $G$ and the sets in $F$ such that $v_i$ and $v_j$ are adjacent if and only if $S_i \cap S_j \neq \emptyset$, for $1 \leq i,j \leq n$. In the special case where there is a tree $T$ such that each set in $F$ corresponds to the vertex set of a subtree of $T$,  then $G$ is called the {\it intersection graph of subtrees of a tree}.

A {\it clique} is a set of vertices that are pairwise adjacent.
An ordering $(v_1, v_2, \ldots, v_n)$ of the vertices of a graph is a {\it perfect elimination order (peo)} if the set of higher numbered neighbors of each vertex forms a clique.  
A {\it maximal clique} is a clique $C$ such that no set of vertices properly containing $C$ is a clique.  Let  $K$ be the set of maximal cliques of a graph $G$. A tree $T$ with a bijection between its vertex set and the cliques in $K$ is called a {\it clique tree} of $G$ if for every vertex $v$ of $G$, the set of vertices of $T$ that correspond to the cliques containing $v$ induce a connected subtree of $T$.

A graph is {\it chordal} if it contains no induced cycle of length 4 or more. A chordal graph on $n$ vertices has at most $n$ maximal cliques \cite{Dirac}. Chordal graphs have many different characterizations. For our purposes, the following will be sufficient. 

\begin{theorem}
[\cite{buneman1974characterisation,fulkerson1965incidence,gavril1972algorithms,gavril1974intersection}]
\label{chordal-big}
Let $G$ be a graph. The following are equivalent.
\begin{itemize}
 \item $G$ is chordal.
 \item $G$ has a perfect elimination order. 
 \item $G$ is the intersection graph of subtrees of a tree.
 \item $G$ has a clique tree. 
\end{itemize}
\end{theorem}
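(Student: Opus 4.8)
The plan is to prove the chain of equivalences in Theorem~\ref{chordal-big} by establishing a cycle of implications: chordal $\Rightarrow$ peo $\Rightarrow$ clique tree $\Rightarrow$ intersection graph of subtrees of a tree $\Rightarrow$ chordal. Since each statement is classical, I would organize the argument so that each implication is short and self-contained, and so that the two ``representation'' characterizations (clique tree and intersection of subtrees) are linked directly, as they are essentially the same object viewed from two angles.

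First I would show that every chordal graph has a perfect elimination order. The standard approach is to prove that every chordal graph $G$ on at least one vertex has a \emph{simplicial} vertex, i.e.\ a vertex whose neighborhood is a clique; then one removes it and argues that the induced subgraph $G-v$ is still chordal (an induced subgraph of a chordal graph is chordal, since any induced $C_k$ with $k\ge 4$ in $G-v$ would also be induced in $G$), so by induction $G-v$ has a peo, and prepending $v$ yields a peo of $G$. The key lemma that a chordal graph has a simplicial vertex is usually proved by a minimal-separator argument: take a shortest minimal vertex separator or, more simply, take two nonadjacent vertices and a minimal separator $S$ between them, show $S$ is a clique by the chordality (no long induced cycle), and locate a simplicial vertex inside one of the components. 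I expect this to be the main obstacle, because it is the only implication requiring a genuinely nontrivial structural argument rather than a bookkeeping translation.

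Next I would establish peo $\Rightarrow$ clique tree. Given a peo, I would build the clique tree incrementally or invoke the well-known correspondence between the maximal cliques and the ``parent pointers'' arising from the elimination: each vertex $v$, together with its higher-numbered neighbors, forms a clique, and the maximal cliques of $G$ are exactly those of the form $\{v\}\cup N^{+}(v)$ that are inclusion-maximal. The crucial verification is the \emph{clique-intersection} property, equivalently that for every vertex $v$ the cliques containing $v$ induce a connected subtree of $T$; this can be proved by arranging the tree edges between maximal cliques weighted by the size of their intersection and taking a maximum-weight spanning tree, then checking connectivity of each vertex's cliques. For the step clique tree $\Rightarrow$ intersection graph of subtrees, I would observe that this is almost immediate: given a clique tree $T$, assign to each vertex $v$ of $G$ the subtree $T_v$ of $T$ induced by the maximal cliques containing $v$ (connected by the clique-tree definition), and then verify that $T_u\cap T_v\neq\emptyset$ if and only if some maximal clique contains both $u$ and $v$, which holds if and only if $uv\in E(G)$.

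Finally I would close the cycle with intersection graph of subtrees $\Rightarrow$ chordal, the direction that makes the whole theorem useful for generation. Suppose $G$ is the intersection graph of subtrees $\{T_{v_1},\dots,T_{v_n}\}$ of a tree $T$, and suppose for contradiction that $G$ contained an induced cycle $v_1 v_2 \cdots v_k v_1$ with $k\ge 4$. The subtrees corresponding to consecutive vertices pairwise intersect while nonconsecutive ones are disjoint; using the Helly property of subtrees of a tree (any pairwise-intersecting family of subtrees has a common vertex) together with the fact that subtrees are connected, I would derive a contradiction by tracing a closed walk in $T$ through the meeting points of consecutive subtrees, which forces a cycle in the tree $T$. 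This Helly-property argument is the heart of the final implication, and it is where the acyclicity of the host tree is used decisively. Once all four implications are in place, the four statements are equivalent, completing the proof.
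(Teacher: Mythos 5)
The paper never proves Theorem~\ref{chordal-big}: it is stated as classical background with citations to Buneman, Fulkerson--Gross, and Gavril, and the only related argument in the paper is Lemma~\ref{treetree}, which refines one direction (an $n$-vertex chordal graph has a representation by $n$ subtrees of an $n$-node tree) by starting from a clique tree whose existence is taken from Theorem~\ref{chordal-big}. So your proposal can only be judged on its own merits, and there it contains one genuine gap, located in the very implication you call the heart of the final step: intersection graph of subtrees $\Rightarrow$ chordal. Your argument invokes the Helly property of subtrees, but Helly is inapplicable here: for a chordless cycle $v_1v_2\cdots v_kv_1$ with $k\ge 4$, the family $\{T_{v_1},\dots,T_{v_k}\}$ is precisely \emph{not} pairwise intersecting (non-consecutive subtrees are disjoint), so the Helly hypothesis fails and the property yields nothing. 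Worse, the intended contradiction --- that a closed walk through the meeting points of consecutive subtrees ``forces a cycle in the tree $T$'' --- is false as stated: trees contain plenty of closed walks; they simply backtrack, so no conflict with acyclicity arises from the mere existence of such a walk.

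The standard repair (essentially Gavril's proof) replaces Helly by edge separation. Since $T_{v_1}$ and $T_{v_3}$ are disjoint connected subtrees of $T$, some edge $e=\{a,b\}$ of $T$ separates them: $T_{v_1}\subseteq A$ and $T_{v_3}\subseteq B$, where $A$ and $B$ are the two components of $T-e$. Then $T_{v_2}$, being connected and meeting both $T_{v_1}$ and $T_{v_3}$, must contain both $a$ and $b$. Now walk along the other arc $v_3,v_4,\dots,v_k,v_1$ and let $v_j$ be the first vertex in this order whose subtree meets $A$. If $j=1$, then $T_{v_k}\subseteq B$, yet $T_{v_k}\cap T_{v_1}\neq\emptyset$ forces $T_{v_1}$ to meet $B$, contradicting $T_{v_1}\subseteq A$; hence $4\le j\le k$, and $T_{v_j}$ meets $A$ by choice and meets $B$ through its intersection with $T_{v_{j-1}}\subseteq B$, so it too contains $a$ and $b$. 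Then $T_{v_2}\cap T_{v_j}\supseteq\{a,b\}$ gives the chord $v_2v_j$, the desired contradiction. A smaller issue of the same kind: your peo $\Rightarrow$ clique tree step defers the entire difficulty to ``checking connectivity'' of a maximum-weight spanning tree of the clique graph; that check is exactly the nontrivial content of the Bernstein--Goodman theorem and needs its own inductive argument. The remaining steps of your outline (Dirac's simplicial-vertex lemma for chordal $\Rightarrow$ peo, and clique tree $\Rightarrow$ subtree representation) are correct and standard.
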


Especially the last two points of Theorem \ref{chordal-big} are crucial for our algorithm and its implementation. To make sure that there is no confusion between the vertices of $G$ and the vertices of a tree or a clique tree, we will from now on refer to vertices of a tree as {\it nodes}.

Rose, Tarjan, and Lueker \cite{rose1976algorithmic} gave an algorithm called Maximal Cardinality Search (MSC) that creates a perfect elimination order of a chordal graph in time $O(n+m)$. Blair and Peyton \cite{BP} gave a modification of MCS to list all the maximal cliques of a chordal graph in time $O(n+m)$. Implicit in their proofs is the following well-known fact, which can be characterized as folklore. 

\begin{lemma}[\cite{BP,rose1976algorithmic}]
\label{chordal-cliquesum}
The sum of the sizes of the maximal cliques of a chordal graph is $O(n+m)$.
\end{lemma}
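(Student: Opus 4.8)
The plan is to exploit the perfect elimination order guaranteed by Theorem~\ref{chordal-big}. First I would fix a perfect elimination order $(v_1, v_2, \ldots, v_n)$ of $G$, which exists because $G$ is chordal. For each vertex $v_i$, let $N^+(v_i)$ denote the set of neighbors of $v_i$ that appear \emph{after} $v_i$ in this order, and set $C_i = \{v_i\} \cup N^+(v_i)$. By the very definition of a perfect elimination order, $N^+(v_i)$ is a clique, and since $v_i$ is adjacent to all of its members, each $C_i$ is a clique of $G$.

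The heart of the argument is the claim that every maximal clique of $G$ coincides with $C_i$ for some $i$. To see this, let $C$ be any maximal clique and let $v_i$ be its lowest-numbered vertex under the chosen order. Every other vertex of $C$ is then a higher-numbered neighbor of $v_i$, so $C \subseteq \{v_i\} \cup N^+(v_i) = C_i$; since $C_i$ is itself a clique and $C$ is maximal, we conclude $C = C_i$. Moreover this assignment is injective, as two maximal cliques sharing the same lowest vertex $v_i$ would both equal $C_i$ and hence coincide. Thus the family of maximal cliques is (as a family of sets) a subfamily of $\{C_1, \ldots, C_n\}$.

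It then remains to bound $\sum_{i=1}^n |C_i|$. Writing $|C_i| = 1 + |N^+(v_i)|$, I would observe that $\sum_{i=1}^n |N^+(v_i)|$ counts every edge of $G$ exactly once, because an edge $v_iv_j$ with $i<j$ is recorded only in $N^+(v_i)$. Hence $\sum_{i=1}^n |C_i| = n + m$, and since the maximal cliques form a subfamily of the $C_i$, the sum of their sizes is at most $n+m = O(n+m)$, as required.

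I expect the only genuinely delicate point to be the claim that every maximal clique arises as some $C_i$; once the lowest-indexed vertex of the clique is singled out, however, this follows immediately from the peo property together with maximality, and the remaining double-counting of edges is entirely routine.
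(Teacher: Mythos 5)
Your proof is correct: every maximal clique of a chordal graph equals $C_i = \{v_i\} \cup N^+(v_i)$ for the lowest-indexed vertex $v_i$ it contains, the map from maximal cliques to such indices is injective, and the exact count $\sum_{i=1}^n |C_i| = n + m$ then gives the bound. The paper offers no proof of this lemma at all --- it cites the fact as folklore implicit in the MCS-based results of Rose--Tarjan--Lueker and Blair--Peyton --- and your perfect-elimination-order counting argument is precisely the standard argument underlying those citations, so you have simply (and correctly) filled in the details the paper leaves to its references.
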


Next, we briefly mention the algorithms for generating chordal graphs from the works of Andreou, Papadopoulou, Spirakis, Theodorides, and Xeros \cite{andreou2005generating}; Pemmaraju, Penumatcha, and Raman \cite{pemmaraju2005approximating}; and Markenzon, Vernet, and Araujo \cite{markenzon2008two}. Some of these algorithms create very limited chordal graphs, which is either mentioned by the authors or clear from the algorithm. Thus, in the following we only mention the algorithms that are general enough to be interesting in our context. 

It should also be noted that the purpose of Andreou et al.~\cite{andreou2005generating} is not to obtain general chordal graphs, but rather chordal graphs with a known bound on some parameter. One of the algorithms that they propose starts from an arbitrary graph and adds edges to obtain a chordal graph. How the edges are added is not given in detail, however it should be noted that there are many algorithms for generating a chordal graph from a given graph by adding a minimal set of edges and their running time is usually $O(nm)$ \cite{heggernes2006minimal}. Andreou et al.~\cite{andreou2005generating} do not report on the quality of chordal graphs obtained by this method.

We highlight below the algorithms that are the most promising with respect to generating random chordal graphs. In addition to these, there is an $O(n^2)$-time  algorithm by Markenzon et al.~\cite{markenzon2008two} that generates a random tree and adds edges to this tree until a chordal graph with desired edge density is obtained. However, no test results about the quality of the generated graphs is given. 

\medskip

\noindent
{\bf Alg 1 \cite{andreou2005generating}.} The algorithm constructs a chordal graph by using a peo. At every iteration, a new vertex is added and made adjacent to a random selection of already existing vertices. Then necessary edges are added to turn the neighborhood of the new vertex into a clique such that a given maximum degree is not exceeded. No test results are given in the paper about the quality of the chordal graphs this algorithm produces. As we found the algorithm interesting, we have implemented it, and we compare the resulting graphs to those generated by our algorithm in Section \ref{chapter:experiments}. 
\medskip

\noindent
{\bf Alg 2 \cite{markenzon2008two,pemmaraju2005approximating}.} The algorithm starts from a single vertex. At each subsequent step, a clique $C$ in the existing graph is chosen at random, and a new vertex is added adjacent to exactly the vertices of $C$. The inverse of the order in which the vertices are added is a peo of the final graph. It is observed by the authors of both papers that this procedure results in chordal graphs with approximately $2n$ edges experimentally. They propose the following changes:

{\bf Alg 2a \cite{markenzon2008two}} modifies the above generated graph by randomly choosing maximal cliques that are adjacent according to the clique tree and merging these until desired edge density is obtained. Some test results about graphs generated by Alg 2a are provided in \cite{markenzon2008two}. Although these tests are not as comprehensive as the ones we give on our algorithms in Section \ref{chapter:experiments}, we compare our results to those of \cite{markenzon2008two} as best we can. The running time of Alg 2a is $O(m+n\alpha(2n,n))$. 
 
{\bf Alg 2b \cite{pemmaraju2005approximating}} is a modification of Alg 2 in a different way: instead of randomly choosing a clique, a maximum clique is chosen and a random subset of it is made adjacent to the new vertex. Although test results for Alg 2b are provided in \cite{pemmaraju2005approximating}, the authors acknowledge that the produced graphs are still very particular with very few large maximal cliques and many very small maximal cliques. For this reason, we do not include Alg 2b in our comparisons.

\section{Generating chordal graphs using subtrees of a tree}\label{chapter:algo}

We find it surprising that the intersection graph of subtrees of a tree characterization of chordal graphs has not been used directly for their generation. Of course, since all characterizations of a chordal graph are equivalent, even the existing algorithms mentioned above could be interpreted as based on any of these characterization. Especially the algorithms based on clique trees can easily be translated to generate subtrees of a tree. However, none of these algorithms generate random subtrees of a randomly generated tree to produce the resulting chordal graph. One reason could be that this does not give a direct way to decide the number of edges in the generated graph. We will see that edge density can be regulated by adjusting the sizes of the generated subtrees. Let us first observe the following.

\begin{lemma}
\label{treetree}
For every chordal graph $G$ on $n$ vertices, there is a tree $T$ on $n$ nodes, and a set of $n$ subtrees of $T$, such that $G$ is the intersection graph of these subtrees.
\end{lemma}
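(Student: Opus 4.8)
The plan is to construct the required tree and subtrees directly from a clique tree of $G$, and then to pad the tree up to exactly $n$ nodes. First I would invoke Theorem~\ref{chordal-big} to obtain a clique tree $T'$ of $G$. By definition, the nodes of $T'$ are in bijection with the maximal cliques of $G$, and for every vertex $v$ of $G$ the set of nodes whose associated cliques contain $v$ induces a connected subtree of $T'$; I would call this subtree $T_v$. This gives a family $\{T_v : v \in V(G)\}$ of exactly $n$ subtrees, one per vertex of $G$.

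Next I would verify that $G$ is precisely the intersection graph of this family. This reduces to an elementary observation, valid in any graph: two vertices $v$ and $w$ are adjacent if and only if some maximal clique contains both of them. Indeed, if $vw \in E(G)$ then $\{v,w\}$ is a clique and extends to a maximal one, while conversely any maximal clique containing both is in particular a clique, so $v$ and $w$ are adjacent. Translating this through the clique-tree bijection, $v$ and $w$ are adjacent exactly when the node sets of $T_v$ and $T_w$ share a common node, i.e.\ when $T_v$ and $T_w$ intersect. Hence $G$ is the intersection graph of the $n$ subtrees $\{T_v\}$ of $T'$.

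The one remaining gap is that $T'$ need not have exactly $n$ nodes. By the bound of Dirac~\cite{Dirac}, a chordal graph on $n$ vertices has at most $n$ maximal cliques, so $T'$ has $k \le n$ nodes. If $k = n$ we are already done; otherwise I would enlarge $T'$ to a tree $T$ on exactly $n$ nodes by attaching the $n-k$ surplus nodes arbitrarily (for instance as a path hanging off any existing node), and assigning none of the new nodes to any subtree. I expect this padding step to be the only point requiring care: one must check that $T$ remains a tree and that no subtree $T_v$ is accidentally enlarged or disconnected. Both are immediate, since the new nodes lie outside every $T_v$ and the node sets $V(T_v)$ therefore stay unchanged and remain connected in the larger tree. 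Consequently the intersection pattern is untouched, and $G$ is the intersection graph of the $n$ subtrees $\{T_v\}$ of the $n$-node tree $T$, as required.
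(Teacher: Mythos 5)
Your proposal is correct and follows essentially the same route as the paper's own proof: take a clique tree, use the per-vertex subtrees of maximal cliques containing each vertex, observe that adjacency coincides with sharing a maximal clique, and pad the clique tree (which has at most $n$ nodes) with extra nodes not assigned to any subtree. Your write-up is slightly more explicit about why the padding preserves the intersection pattern, but the construction and all key steps are identical.
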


\begin{proof}
 Let $G$ be a chordal graph on $n$ vertices and let $T'$ be a clique tree of $G$. Let us call the vertices of $G$: $v_1, v_2, \ldots, v_n$. Define subtree $T'_i$ to be the subtree of $T'$ that corresponds to the nodes (maximal cliques) that contain vertex $v_i$, for $1 \le i \le n$. By the definition of a clique tree, $T'$ has at most $n$ nodes and each $T'_i$ is a connected subgraph of $T'$. If $T'$ has fewer than $n$ nodes, we can add new nodes adjacent to arbitrary nodes of $T'$ until we get a new tree $T$ with $n$ nodes. The subtrees stay the same. As two vertices are adjacent in $G$ if and only if they appear together in a clique, $G$ is the intersection graph of subtrees $T'_1, \ldots, T'_n$ of $T$. \qed
\end{proof}

Based on Lemma \ref{treetree}, we are ready to present our main algorithm for generating chordal graphs on $n$ vertices: 

\bigskip

{\bf Algorithm ChordalGen} 

\smallskip
{\bf Input:}  $n$ and one or two other numbers to guide the number of resulting edges.

{\bf Output:} A chordal graph $G$ on $n$ vertices and $m$ edges. 

\smallskip
1. Generate a tree $T$ on $n$ nodes uniformly at random. 

2. Generate $n$ non-empty subtrees of $T: \{T_1, \ldots, T_n\}$. 

3. Output as $G$ the intersection graph of $\{V(T_1), \ldots, V(T_n)\}$. 

\bigskip

By Theorem \ref{chordal-big}, the graph generated by Algorithm ChordalGen is chordal. By Lemma \ref{treetree}, our algorithm can create any chordal graph. Later in this section we will describe three different methods for generating the $n$ subtrees in Step 2. Each method will take one or two parameters to guide the average size of the generated subtrees, with the purpose of controlling the resulting number of edges in $G$. Our algorithm is flexible in the sense that additional ways to generate the subtrees can be suggested and tested later. 

\medskip

We need to evaluate the order of $\sum_{i=1 }^{n} |V(T_i)|$ to analyze the running time of Algorithm ChordalGen. Let us explain at this point that in the preliminary version of this paper \cite{CIAC2017}, $\sum_{i=1 }^{n} |V(T_i)|$ was mistakenly claimed to be linear in the size of the generated chordal graph, that is $O(n+m)$. However, it turns out that $\sum_{i=1 }^{n} |V(T_i)|$ is $\Omega(mn^{1/4})$ as shown in what follows.

\begin{lemma}\label{lem:key}
Let $T$ be a tree on $n$ nodes, and let $T_1,\ldots ,T_n$ be $n$ subtrees of $T$ whose intersection graph is a chordal graph on $n$ vertices and $m$ edges. Then  $\sum_{i=1 }^{n} |V(T_i)|$is $\Omega(mn^{1/4})$.
\end{lemma}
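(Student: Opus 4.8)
The plan is to establish the bound constructively, by producing, for every sufficiently large $n$, a host tree $T$ together with a system of $n$ subtrees whose intersection graph has $m$ edges and for which $\sum_{i=1}^n|V(T_i)|$ attains the order $mn^{1/4}$; an infinite witnessing family of this kind certifies that the quantity grows at least this fast and, in particular, refutes the earlier $O(n+m)$ estimate. The combinatorial engine I would use records, for each node $u$ of $T$, the multiplicity $d_u=|\{i:u\in V(T_i)\}|$ and, for each edge $f=(u,v)$ of $T$, the number $D_f$ of subtrees containing both endpoints. Counting node–subtree incidences gives $\sum_i|V(T_i)|=\sum_u d_u$ and counting edge–subtree incidences gives $\sum_i(|V(T_i)|-1)=\sum_f D_f$. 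The key tool is that, since the intersection of two subtrees of a tree is again a (possibly empty) subtree, an Euler-characteristic count yields
\[ m=\sum_u\binom{d_u}{2}-\sum_f\binom{D_f}{2}, \]
because each intersecting pair contributes $|V(T_i\cap T_j)|-|E(T_i\cap T_j)|=1$ and each disjoint pair contributes $0$. This identity is precisely what lets the edge count stay small while $\sum_u d_u$ is large: whenever the subtrees crossing an edge are almost all of those meeting at its endpoints, the two binomial sums nearly cancel, so many incidences cost very few edges.

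I would then design the family so that the multiplicities $d_u$ are large and roughly uniform while the edge-multiplicities $D_f$ track them closely. Concretely I aim for the regime $m=\Theta(n^{3/2})$ and $\sum_u d_u=\Theta(n^{7/4})$, at which point $mn^{1/4}=\Theta(n^{7/4})$ and the target is met; other values of $m$ are then reached by taking disjoint scaled copies and padding $T$ with extra isolated nodes, which changes neither the order of $m$ nor of the sum. The host tree carries a backbone along which a common bundle of subtrees runs, forcing $D_f$ and hence the subtracted sum $\sum_f\binom{D_f}{2}$ to be large, while the $n$ subtrees are made pairwise distinct by being peeled off the bundle in a controlled, branching fashion, with their apexes spread just enough that the intersection graph does not collapse into a single clique. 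Plugging the resulting profiles of $d_u$ and $D_f$ into the displayed identity yields $m$, summing $d_u$ yields the total size, and comparing the two gives $\sum_i|V(T_i)|=\Omega(mn^{1/4})$ for the whole family.

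The main obstacle—and the reason the exponent is the non-obvious $1/4$—is the tension the construction must resolve: large total size demands heavy overlap, yet overlap tends to create edges, and the naive ``blocks of identical subtrees'' layout is self-defeating, since with $n$ subtrees on only $n$ nodes the average region size is forced down to the average bundle size, which pins $\sum_i|V(T_i)|$ at $O(n+m)$. Beating this barrier is exactly what the backbone-sharing, branching layout above is meant to achieve, by arranging that $\sum_f\binom{D_f}{2}$ cancels almost all of $\sum_u\binom{d_u}{2}$. I therefore expect the real work to lie in exhibiting a single family that simultaneously keeps the subtrees distinct, the edge count at $\Theta(n^{3/2})$, and the total size at $\Theta(n^{7/4})$; verifying this balance against the competing cancellation in the identity is the crux of the argument.
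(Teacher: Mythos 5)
Your framing is sound: the lemma is indeed established by exhibiting an infinite witness family, your identity $m=\sum_u\binom{d_u}{2}-\sum_f\binom{D_f}{2}$ is correct (intersections of subtrees of a tree are empty or connected, so each intersecting pair contributes exactly $1$), and your target regime --- $n=p^4$, $m=\Theta(n^{3/2})$, $\sum_i|V(T_i)|=\Theta(n^{7/4})$ --- is exactly the one the paper hits. But the proposal stops where the proof has to begin: no construction is actually exhibited. The ``backbone with subtrees peeled off in a controlled, branching fashion'' is never specified, no profile of $d_u$ and $D_f$ is computed, and the claimed near-cancellation of $\sum_u\binom{d_u}{2}$ against $\sum_f\binom{D_f}{2}$ is never verified. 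As it stands, this is a research plan whose entire technical content --- which you yourself identify as ``the crux'' --- is deferred.

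More importantly, the one concrete structural claim you do make is false, and it steers you away from the correct and much simpler construction. You dismiss the ``blocks of identical subtrees'' layout as self-defeating, arguing that the average region size is forced down to the average bundle size, pinning the total at $O(n+m)$. No such constraint exists, because nothing forces the bundles to be uniform. The paper's proof is precisely an identical-blocks construction: $T$ is a path on $n=p^4$ nodes; for each of the first $p^2$ nodes one takes $s=p^2-p$ copies of the singleton subtree on that node, and one takes $p^3$ identical copies of the long subpath on the remaining $p^4-p^2$ nodes. The intersection graph is a disjoint union of cliques ($p^2$ copies of $K_s$ and one $K_{p^3}$), so $m=p^6-p^5$ by direct counting, while the long bundle alone contributes $p^3(p^4-p^2)$ to the total size, giving a ratio of $(p^7-p^5)/(p^6-p^5)=p+1>n^{1/4}$. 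One heavily multiplied bundle whose region has size $\Theta(n)$, with the leftover subtrees crammed as singletons onto the $\Theta(\sqrt n)$ remaining nodes, is exactly what beats $O(n+m)$; and since the lemma nowhere requires the subtrees to be distinct, your insistence on pairwise distinct subtrees and on delicate cancellation in the Euler identity is a complication of your own making. Your identity is a fine tool and could be used to verify any candidate family, but the paper never needs it: disjoint unions of cliques make all counts immediate.
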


\begin{proof}
Let $G$ be a chordal graph that is the intersection graph of  $\{V(T_1),\ldots ,V(T_n)\}$, such that every subtree $T_i$ is associated with a vertex $v_i$ of $G$.   
Choose an integer $p\geq  2$. Let $n = p^4, n' = p^2 + 1, s = p^2 - p$, and $s_{n'} = p^3$. Note that
$n > n'$. Let $T$ be a path on $n$ nodes $x_1, \ldots x_n$. Let $T_i$ be the subtree of $T$ consisting of the single node $x_i$, for $i= 1, \ldots , n'-1$. Let $T_{n'}$ be the subpath of $T$ consisting of the nodes  from $n'$ to $n$. The representation contains $s$ copies of $T_i$ for every $i=1, \ldots ,n'-1$ and also $s_{n'}$ copies of $T_{n'}$. Therefore, the number of subtrees is
$$ s (n' - 1) + s_{n'} = (p^2 - p) p^2 + p^3 = p^4 = n.$$
The corresponding graph is the disjoint union of a $K_{s_{n'}}$ and $n' - 1$ copies of $K_s$. The number $m$ of edges of this graph is
$$ m = (n' - 1)\frac{s(s - 1)}{2} + \frac{s_{n'}(s_{n'} - 1)}{2} = p^2 \frac{(p^2 - p)(p^2 - p - 1)}{2}
+ \frac{p^3(p^3 - 1)}{2} = p^6 - p^5.$$
On the other hand, the total size of the subtrees is
$$s (n' - 1) + s_{n'}(n - n' + 1) = (p^2 - p)p^2 + p^3(p^4 - p^2) = p^7 - p^5 + p^4 - p^3 \geq p^7 - p^5.$$
Therefore, 
$$\frac{\sum_{i=1 }^{n} |V(T_i)|}{m} \geq \frac{p^7 - p^5}{p^6 - p^5} = p + 1 > n^{1/4}.$$
Since the inequality holds for infinitely many values of $p$, we conclude the proof. \qed
\end{proof}

We are now ready to state our main result about Algorithm ChordalGen.

\begin{theorem}
\label{runningtime}
Algorithm ChordalGen generates a chordal graph with $n$ vertices in time $O(S+\sum_{i=1 }^{n} |V(T_i)|)$, where $O(S)$ is the time required to generate $n$ subtrees of a host tree on $n$ nodes.
\end{theorem}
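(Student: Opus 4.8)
The plan is to bound the cost of each of the three steps of \textbf{ChordalGen} separately and then use the inequality $n \le \sum_{i=1}^{n} |V(T_i)|$, which holds because every $T_i$ is non-empty, to absorb all lower-order terms into $\sum_i |V(T_i)|$. First I would dispatch Steps~1 and~2. A uniformly random labeled tree on $n$ nodes can be produced in $O(n)$ time, for instance by drawing a random Pr\"ufer sequence and decoding it, so Step~1 costs $O(n)$. Step~2 costs $O(S)$ by the very definition of $S$. Hence the only part that needs genuine work is showing that Step~3 runs in $O(\sum_i |V(T_i)|)$ time.

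The heart of the argument is Step~3, and the first move is to compute, for every node $x$ of $T$, the list $L_x$ of indices $i$ with $x \in V(T_i)$. Initializing $n$ empty lists and then making a single pass over the node sets $V(T_1),\ldots,V(T_n)$, appending $i$ to $L_x$ for each $x\in V(T_i)$, does this in $O(n+\sum_i |V(T_i)|)=O(\sum_i |V(T_i)|)$ time. The lists $(L_x)_{x\in V(T)}$ together with $T$ already constitute a representation of $G$: by the classical Helly property of subtrees of a tree, any pairwise-intersecting subfamily of the $T_i$ has a common node, so $v_i$ and $v_j$ are adjacent in $G$ exactly when $i$ and $j$ occur together in some $L_x$, and each $L_x$ is itself a clique (all its subtrees meet at $x$). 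Consequently every maximal clique of $G$ equals $L_x$ for some $x$: if $C$ is a maximal clique, Helly yields a node $x$ with $C\subseteq L_x$, and since $L_x$ is a clique, maximality forces $C=L_x$. Choosing one representative node per maximal clique, and noting that distinct maximal cliques require distinct representatives, the total size of the maximal-clique list is at most $\sum_x |L_x|=\sum_i |V(T_i)|$; an $O(\sum_i|V(T_i)|)$-time traversal of $T$ in the spirit of the clique-tree/MCS machinery of Blair and Peyton \cite{BP,rose1976algorithmic} then extracts the inclusion-maximal $L_x$ within the same bound (this is also what yields the maximal-clique statistics reported in the experiments). Summing the three contributions gives total time $O(n)+O(S)+O(\sum_i |V(T_i)|)=O(S+\sum_i |V(T_i)|)$, as claimed.

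The step I expect to be the main obstacle, and the one that must be stated with care, is precisely what ``output the intersection graph'' should mean in Step~3. One must not produce an explicit edge list: the configuration of $n$ identical single-node subtrees (which gives $G=K_n$ with $\sum_i |V(T_i)|=n$ but $m=\binom{n}{2}$) shows that $m$ can be $\omega(\sum_i |V(T_i)|)$, so merely writing down all edges would already violate the claimed bound. The correct notion of output is the intersection-of-subtrees representation itself, i.e.\ the host tree together with the node sets (equivalently, the maximal-clique or clique-tree description), whose size is exactly $\sum_i|V(T_i)|$; this also explains why $\Omega(\sum_i|V(T_i)|)$ is the best possible running time for any method that writes out the entire node set, and why the erroneous $O(n+m)$ claim of the preliminary version must be replaced by the weaker relationship established in Lemma~\ref{lem:key}. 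The two technical points that make the upper bound attainable are (a) avoiding the naive construction of adjacencies directly from the $L_x$, which would cost $\sum_x |L_x|^2$ and can be far larger than $\sum_i |V(T_i)|$, and (b) extracting the maximal cliques in time linear in the representation size; the Helly property together with the clique-tree structure of $T$ is exactly what enables both.
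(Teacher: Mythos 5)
Your proposal is correct, and its skeleton is the same as the paper's: Steps 1 and 2 are dispatched in $O(n)$ and $O(S)$ respectively, and Step 3 is handled by computing, for every node $x$ of the host tree, the list of subtrees containing $x$ (your $L_x$ is exactly the paper's $C_x$), observing that these lists are cliques comprising all maximal cliques of $G$, and invoking the clique-tree machinery of Blair and Peyton \cite{BP} to prune non-maximal lists within the representation size. Where you genuinely diverge is in the final accounting, and your version is the more careful one. The paper's proof goes on to produce an adjacency-list representation of $G$ in an extra $O(n+m)$ time (via Markenzon et al.~\cite{markenzon2008two}) and then asserts that ``by Lemma~\ref{lem:key}'' the whole of Step 3 is $O(\sum_{i=1}^{n}|V(T_i)|)$; but Lemma~\ref{lem:key} only exhibits a family of instances where $\sum_i |V(T_i)|$ is large relative to $m$ --- it is not the universal inequality $n+m=O(\sum_i|V(T_i)|)$ that this absorption step would require, and your $K_n$ example ($n$ identical one-node subtrees, $\sum_i|V(T_i)|=n$, $m=\binom{n}{2}$) shows that no such inequality holds. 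You resolve this by declaring the output to be the intersection/clique-tree representation rather than an explicit edge list, which is the only reading under which the stated bound $O(S+\sum_i|V(T_i)|)$ is actually achievable (and is the reading consistent with the paper's ``best possible for a method generating the entire node set of subtrees'' claim). In short: same decomposition and same key data structure, but your treatment of what ``output the intersection graph'' means closes a gap that the paper's own proof papers over; the paper's route buys an adjacency list as output, at the price of a time bound that is only valid when $m=O(\sum_i|V(T_i)|)$. One minor further difference: for Step 1 you use Pr\"ufer decoding, which does produce a uniformly random labeled tree, whereas the paper uses the random-attachment procedure of Rodionov and Choo \cite{Rodionov}, which is $O(n)$ but not uniform over labeled trees, so your choice also matches the algorithm's ``uniformly at random'' specification more faithfully.
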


\begin{proof}
By Theorem \ref{chordal-big},  graph $G$ generated by Algorithm ChordalGen is chordal. Clearly $G$ has $n$ vertices. Let $m$ be the number of edges of $G$. Let us go through the steps of the algorithm to argue for the running time.

Step 1. For the generation of a random host tree $T$ on $n$ vertices, we use the following method by Rodionov and Choo \cite{Rodionov}, which can easily be implemented in $O(n)$ time: start with a tree $T$ that contains only one node. Then repeat $n-1$ times the following: pick a random vertex $x$ of $T$ and add a new vertex adjacent to it. 

Step 2. We generate $n$ subtrees $T_1,\ldots ,T_n$ of $T$ using a subtree generator. According to the premises of the theorem, this adds   $O(S)$  to the overall time.
 
Step 3. The sum of the sizes of the generated subtrees is $\sum_{i=1 }^{n} |V(T_i)|$. Let us now explain how we can obtain in time $O(n+m)$ the chordal graph which is the intersection graph of these subtrees. For every node $x$ of $T$, let us define the following list: $C_x = \{v_j \mid T_j$ contains $x \}$, i.e., vertices of $G$ whose corresponding subtrees contain node $x$ of $T$. Let ${\cal C}= \{C_x \mid x \in V(T)\}$. Observe that every set in $\cal{C}$ is a clique of $G$, and $\cal{C}$ contains all maximal cliques of $G$. However, some of the cliques in $\cal{C}$ may not be maximal.
If we blow up every node $x$ of $T$ to the set $C_x$, we get a tree which is almost a clique-tree of $G$; this procedure can clearly be done in $O(n+m)$ time. In the resulting tree, non-maximal cliques simply need to be deleted or merged into the maximal ones. By the methods described by Blair and Peyton \cite{BP} it is possible to turn $T$ into a proper clique tree for $G$ in time $O(n+m)$. Thus, in total $O(\sum_{i=1 }^{n} |V(T_i)|+n+m)$ time we both have a representation of our output graph $G$ and a list of maximal cliques of it. It could, however, be desirable to output an adjacency list representation for $G$. Markenzon et al.~\cite{markenzon2008two}, using the methods of Blair and Peyton \cite{BP}, explain how this can be done in $O(n+m)$ time. By Lemma \ref{lem:key}, the overall running time of this step is $O(\sum_{i=1 }^{n} |V(T_i)|)$.  \qed
\end{proof}

Now we are ready to give the details of how the subtrees of Step 2 are generated. In the subsections below, we present three methods for generating $n$ subtrees of $T$. Then, in Section \ref{chapter:experiments}, we will test our algorithm with each of these methods and compare the results with each other, as well as with Alg 1 and Alg 2a.

\subsection{Algorithm GrowingSubtree}
\label{section:blooming_subtree}
This algorithm takes as input a tree $T$ on $n$ nodes, and an integer $k$, and generates $n$ subtrees of $T$ of average size $ \frac{k+1}{2}$. In our test results, we give both $k$ and the resulting number of edges, $m$, to give an indication of how $k$ affects the density of the generated graph. 

For each subtree $T_i$, the algorithm picks a size $k_i$ randomly from $[1, k]$.  Then a random node of $T$ is chosen as the single node of $T_i$ to start with. In each of the subsequent $k_i-1$ iterations, we pick a random node of $T$ in the neighborhood of $T_i$ and add it to $T_i$.

\medskip

{\bf Algorithm GrowingSubtree} 

\smallskip
{\bf Input:}  A tree $T$ on $n$ nodes and a positive integer $k \le n$

{\bf Output:}  A set of $n$ non-empty subtrees of $T$ of average size $\frac{k+1}{2}$

\smallskip
{\bf for} $i=1$ {\bf to} $n$ {\bf do}

\quad Select a random node $x$ of $T$ and set $T_i=\{x\}$

\quad Select a random integer $k_i$ between 1 and $k$

\quad {\bf for} $j=1$ {\bf to} $k_i-1$ {\bf do}

\quad \quad Select a random node $y$ of $T_i$ that has neighbors in $T$ outside of $T_i$

\quad \quad Select a random neighbor $z$ of $y$ outside of $T_i$ and add $z$ to $T_i$

\smallskip
Output $\{T_1, T_2, \ldots, T_n\}$ \\

\begin{lemma}
\label{growingsubtree-runtime}
The running time of Algorithm GrowingSubtree is $O(\sum_{i=1 }^{n} |V(T_i)|)$.
\end{lemma}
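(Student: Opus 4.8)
The plan is to observe that the running time is governed by the total number of node-additions performed in the inner loop, and then to argue that, with suitable data structures, every operation of the algorithm runs in constant time, so that the total is proportional to this number of additions. First I would count the basic operations: the outer loop runs $n$ times, and for subtree $T_i$ the inner loop executes exactly $k_i-1 = |V(T_i)|-1$ iterations, each adding precisely one node to $T_i$. Hence the total number of inner-loop iterations, summed over all subtrees, is $\sum_{i=1}^{n}(|V(T_i)|-1)\le \sum_{i=1}^{n}|V(T_i)|$. Since each $|V(T_i)|\ge 1$ we also have $\sum_i |V(T_i)|\ge n$, so the $O(n)$ cost of the outer-loop bookkeeping (drawing the seed node $x$ and the random size $k_i$) is absorbed into $O(\sum_i |V(T_i)|)$. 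It therefore suffices to show that each inner-loop iteration, together with the per-subtree initialization and clean-up, runs in $O(1)$ time.

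Next I would specify the data structures that realize the two sampling steps in constant time. For the current subtree $T_i$ I maintain a frontier array $F$ holding those nodes of $T_i$ that still have a neighbor outside $T_i$, equipped with an index map so that a uniformly random $y$ can be drawn and an element deleted in $O(1)$ time. For each node $y$ currently in $T_i$ I keep a counter $\mathrm{out}(y)$ and arrange its adjacency list so that its $\mathrm{out}(y)$ outside-neighbors occupy a prefix; a random outside-neighbor $z$ is then read off a random prefix entry, again in $O(1)$. Adding $z$ then requires only moving $z$ out of $y$'s outside-prefix (one swap, decrement $\mathrm{out}(y)$, and delete $y$ from $F$ if the counter hits $0$) and initializing $z$'s own outside structure.

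The hard part, which I would flag as the crux, is initializing the outside-neighbor set of $z$ in $O(1)$ rather than in time proportional to $d_T(z)$; a naive scan would cost the latter and would break the bound, as one sees by considering a star whose center is added to many small subtrees. Here I would exploit that $T$ is a tree: since $T_i$ is connected and $z$ is attached to it through the single edge $yz$, acyclicity forces $y$ to be the \emph{only} neighbor of $z$ lying in $T_i$. Consequently all of $z$'s remaining $d_T(z)-1$ neighbors are automatically outside $T_i$, so I can set $\mathrm{out}(z)=d_T(z)-1$ and expose them as a prefix of $z$'s adjacency list by a single swap moving $y$ to the end, with no scanning. Finally, once $T_i$ is completed I reset only the entries touched during its construction, namely the nodes of $V(T_i)$ and the frontier array, at cost $O(|V(T_i)|)$. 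Combining the $O(1)$ per-iteration bound with the iteration count from the first step, and summing the $O(|V(T_i)|)$ per-subtree clean-up to $O(\sum_i |V(T_i)|)$, yields an overall running time of $O(\sum_{i=1}^{n}|V(T_i)|)$, as claimed.
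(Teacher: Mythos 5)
Your proof is correct and takes essentially the same approach as the paper's: constant-time random selection via arrays with separation indices / swap tricks, maintenance of the set of nodes of $T_i$ that still have outside neighbors, the key acyclicity observation that the attaching node $y$ is the \emph{only} neighbor of $z$ inside $T_i$ (so $z$'s outside-neighbor information is initialized in $O(1)$ rather than $O(d_T(z))$), and per-subtree resets costing $O(|V(T_i)|)$. If anything, your explicit swap of $y$ out of $z$'s outside-prefix is a slightly more careful rendering of the detail the paper phrases as a leaf-check, but the underlying ideas coincide.
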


\begin{proof}
Observe first that each subtree $T_i$ can be represented simply a list of nodes of $T$. 
We show that after an initial $O(n)$ preprocessing time, each subtree $T_i$ can be generated in time $O(|V(T_i)|)$. For this, we need to be able to add a new node to $T_i$ in constant time, at each of the $k_i - 1$ steps. 
	
	As selecting random elements in constant time is easier when accessing the elements of an array directly by indices, we start with copying the nodes of $T$ into an array $A$ of size $n$, and copying the adjacency list of each node $x$ into an array $A_x$ of size $d(x)$. This can clearly be done in total time $O(n)$ since $T$ is a tree. 
	
	In general, selecting an unselected element of a set at random can be done easily in constant time if the set is represented with an array. Let us say we have an array $S$ of $t$ elements. We keep a separation index $s$ that separates the selected elements from the not selected ones. At the beginning $s$ is 1. At each step, we generate a random integer $r$ between $s$ and $t$. $S[r]$ is our randomly selected element. Then we swap the elements $S[s]$ and $S[r]$ and increase $s$ by 1. 
	
	We can use this method both for selecting a node $y$ of $T_i$ that still has neighbors outside and for selecting a neighbor $z$ of $y$ that has not yet been selected. For the latter, whenever we select a neighbor $z$ of $y$, we move $z$ to the first part of the array $A_x$ using swap. When the separation index reaches the degree of $y$ then we know that $y$ should not be selected to grow the subtree $T_i$ at later steps. Representing $T_i$ with an array of size $k_i$, we can use the same trick to move $y$ to a part of the array that we will not select from. Also, when $z$ is added, we can check whether it is a leaf in $T$ in constant time, and immediately move it to the irrelevant part of the array for $T_i$ if so, since $z$ can then not be used for growing  $T_i$ at later steps. It is sufficient to check that $z$ is a leaf of $T$, because otherwise it must have neighbors outside of $T_i$, since $T$ is a tree and we cannot have cycles. When the generation of $T_i$ is finished, the separation indices of each 
	of its nodes should be reset before we start generating $T_{i+1}$. The adjacency arrays need not be reorganized, as we will anyway be selecting neighbors at random.
	
	Note that we do not need this trick to select an initial node $x$ of each subtree $T_i$, because we should indeed be able to select the same node several times (and grow another subtree from it perhaps in a different way).
	
	With the described method, each step of Algorithm GrowingSubtree takes $O(1)$ time, in addition to initial $O(n)$ time to copy the information into appropriate arrays. Thus the total running time is $O(\sum_{i=1 }^{n} |V(T_i)|)$. 
\qed
\end{proof}

Lemma \ref{growingsubtree-runtime} together with Theorem \ref{runningtime} gives the following.

\begin{corollary}
Algorithm ChordalGen, using the GrowingSubtree method, runs in time $O(\sum_{i=1 }^{n} |V(T_i)|)$.
\end{corollary}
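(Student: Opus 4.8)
The plan is to obtain this corollary by directly composing the two preceding results, as no new combinatorial argument is required. First I would observe that the GrowingSubtree method is precisely the subtree generator invoked in Step~2 of Algorithm ChordalGen; hence the term $O(S)$ in Theorem~\ref{runningtime} --- the time to produce the $n$ subtrees of the host tree --- is exactly the running time of GrowingSubtree on this instance.

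Next I would invoke Theorem~\ref{runningtime}, which bounds the total running time of ChordalGen by $O(S + \sum_{i=1}^{n} |V(T_i)|)$, and substitute the bound on $S$ supplied by Lemma~\ref{growingsubtree-runtime}, namely $S = O(\sum_{i=1}^{n} |V(T_i)|)$. The only point worth a moment's care is that $\sum_{i=1}^{n}|V(T_i)|$ denotes the same quantity in both statements: the total number of tree nodes over all $n$ subtrees actually generated. This holds by construction, since these subtrees are simultaneously the output of GrowingSubtree and the input to the intersection-graph step analysed in Theorem~\ref{runningtime}.

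Finally I would simplify
$$O\!\left(S + \sum_{i=1}^{n}|V(T_i)|\right) = O\!\left(\sum_{i=1}^{n}|V(T_i)| + \sum_{i=1}^{n}|V(T_i)|\right) = O\!\left(\sum_{i=1}^{n}|V(T_i)|\right),$$
which is exactly the claimed bound. I do not expect any genuine obstacle here: the $O(n+m)$ contributions from Step~3 have already been absorbed into the $\sum_{i=1}^{n}|V(T_i)|$ term when proving Theorem~\ref{runningtime} (using Lemma~\ref{lem:key} together with the trivial fact that every subtree is non-empty, so $\sum_{i=1}^{n}|V(T_i)| \ge n$). Thus the corollary reduces to the single observation that the generator's own cost is dominated by this same term.
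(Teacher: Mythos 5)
Your proposal is correct and follows exactly the paper's route: the paper derives this corollary with no separate argument, simply stating that Lemma~\ref{growingsubtree-runtime} together with Theorem~\ref{runningtime} gives the result, i.e., substituting $S = O(\sum_{i=1}^{n}|V(T_i)|)$ into the bound $O(S + \sum_{i=1}^{n}|V(T_i)|)$. Your additional remarks (that the two occurrences of $\sum_{i=1}^{n}|V(T_i)|$ denote the same quantity, and that the $O(n+m)$ cost of Step~3 was already absorbed inside the proof of Theorem~\ref{runningtime}) are accurate but not needed beyond the composition itself.
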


We will see in Section \ref{chapter:experiments} that this method generates chordal graphs with the most even distribution of maximal clique sizes.

\subsection{Algorithm ConnectingNodes}
\label{section:thres_prob_vertices}

This algorithm takes as input a tree $T$ on $n$ nodes, and a positive real number $\lambda$. 
The purpose of the parameter $\lambda$ is to guide the desired number of resulting edges in the graph.
To generate each subtree $T_i$, we first select $k_i$ nodes of $T$, where $k_i$ is a random integer generated by making use of $\lambda$. $T_i$ is then generated to be the minimal subtree that contains the selected $k_i$ nodes. This implies that a subtree will most likely have many more nodes than those selected initially, and this must be taken into consideration when choosing $\lambda$. In our test results, we give both $\lambda$ and the resulting number of edges, $m$, to give an indication of how $\lambda$ affects the density of the generated graph.

In the following algorithm, we will make use of the standard Breadth First Search (BFS) algorithm from an arbitrary node $r$ of $T$. We will then treat $T$ as a rooted tree with root $r$, and speak about parent-child relation in the standard way, with respect to root $r$. The BFS level of a vertex is simply the distance of that vertex from $r$.

\medskip 

{\bf Algorithm ConnectingNodes} 

\smallskip
{\bf Input:}  A tree $T$ on $n$ nodes and a positive real number $\lambda$

{\bf Output:}  A set of $n$ subtrees of $T$

\smallskip

Let $r$ be an arbitrary node of $T$

Perform BFS from $r$, and identify the parent $P(x)$ and the BFS level $l(x)$ of each node $x$

$L \gets \emptyset$

{\bf for} $i=1$ {\bf to} $n$ {\bf do}

\quad $T_i \gets \emptyset$

\quad Select a random integer $k_i$ from Poisson distribution with mean $\lambda$  

\quad {\bf if} $k_i=0$  {\bf then} $k_i \gets 1$

\quad {\bf else if} $k_i > n$  {\bf then} $k_i \gets n$

\quad Select $k_i$ random nodes from $T$ to form $T_i = \{x_1,\ldots,x_{k_i}\}$ and $L = \{x_1,\ldots,x_{k_i}\}$

\quad $d \gets \max_{x \in T_i} l(x)$  

\quad {\bf while} $|L| > 1$

\qquad {\bf for all} $x \in L$ such that $l(x) = d$ {\bf do}

\qquad \quad $T_i \gets T_i \cup P(x) $, $L \gets (L \setminus \{x\}) \cup P(x) $

\qquad $d \gets d-1 $

\smallskip
Output $\{T_1, T_2, \ldots, T_n\}$ \\

For each subtree $T_i$, we first generate a random integer $k_i$ by making use of Poisson distribution with mean $\lambda$. Poisson distribution is a discrete probability distribution widely used to model number of occurrences of an event over a specified domain such as time, space etc. In our case, the domain is the host tree, and an event is selection of a node from the host tree. The parameter of this distribution is $\lambda$ and it is the average rate of event occurrences, which implies that the initial $k_i$ values will tend to increase on the average as $\lambda$ increases. The set of possible values a Poisson random variable can take is nonnegative integers, regardless of the value of $\lambda$. However, the minimum and maximum number of nodes that a subtree of an $n$-node host tree can contain are 1 and $n$ respectively. Therefore, we equate $k_i$ to 1 if it is zero, and to $n$ if it is greater than $n$. In this method, the only reason why we use Poisson distribution is that we were not able to achieve a good precision for edge density by picking a random integer uniformly at random from a given interval. To generate the minimal subtree that contains the $k_i$ selected nodes, we make use of the nodes' parent and level information retrieved during BFS. Our key observation is that the minimal subtree must contain the parents of all selected nodes at some level $d$ if there are other selected nodes at levels less than or equal to $d$. Using this idea, we add a node to $T_i$ only when an edge incident to it has to be in $T_i$ to join a node to the others. We start from the highest level (highest distance from $r$) and proceed by moving toward the root until all nodes in the selection become connected. The set $L$ keeps the unprocessed nodes yet to be connected to form the subtree $T_i$. At each step, we consider the nodes in $L$ that are at the same level, which are going to be joined as we move through the levels. Once the parents of those nodes are identified, we are done with level $d$ and there is no need to reconsider nodes at level $d$ any further. Therefore, we replace those nodes with their parents in $L$, which are to be considered at the next step. Parent nodes are also added to $T_i$ because they lie on the paths that connect the node selection. Afterwards, we move to the next level and apply the same procedure. This process continues until a single node is left in $L$, which simply means that we have a node set that has been linked at a single node already and that $T_i$ includes all nodes of the subtree that minimally connects the randomly selected node set.

\begin{lemma}
	\label{connectingnodes-runtime}
	The running time of Algorithm ConnectingNodes is $O(\sum_{i=1}^{n} |V(T_i)|)$.
	
\end{lemma}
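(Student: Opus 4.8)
The plan is to mirror the structure of the proof of Lemma~\ref{growingsubtree-runtime}: first account for an initial $O(n)$ preprocessing, and then show that each subtree $T_i$ can be produced in time $O(|V(T_i)|)$. Summing over $i$ and using that $\sum_{i=1}^{n} |V(T_i)| \ge n$ (each subtree being non-empty) then yields the claimed $O(\sum_{i=1}^{n} |V(T_i)|)$ bound.

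First I would handle the preprocessing. The single BFS from $r$ runs in $O(n)$ time since $T$ is a tree with $n-1$ edges, and it furnishes $P(x)$ and $l(x)$ for every node $x$ in arrays indexed by node. I would also allocate, once and for all, a global array of \emph{buckets} indexed by BFS level (one list per level) to hold the nodes currently in $L$, together with a membership/timestamp array over the nodes of $T$; this costs $O(n)$ and is reused across all subtrees.

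Next I would bound the per-subtree cost. Drawing $k_i$ from the Poisson distribution and clamping it to $[1,n]$ is $O(1)$; selecting the $k_i$ initial nodes, inserting them into their level buckets (with membership marks), and computing $d = \max_{x} l(x)$ all take $O(k_i) = O(|V(T_i)|)$ time. The crux is the while loop. The key structural observation is that every node that ever enters $L$ is simultaneously inserted into $T_i$: the initial selection lies in $T_i$, and each added parent $P(x)$ is put into $T_i$ as well. Hence the set of nodes ever appearing in $L$ is contained in $V(T_i)$. Using the membership marks I would insert each parent into $L$ and into $T_i$ only once, so a node is processed (removed from its level bucket and replaced by its parent) at most once, and the total number of parent-insertion steps is at most $|V(T_i)|$. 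Moreover, since the parents added while processing level $d$ populate level $d-1$, no processed level is empty; thus the number of levels examined is at most the number of nodes processed, again $O(|V(T_i)|)$. Iterating the inner for loop over bucket $d$ visits exactly the nodes to be processed, and the test $|L| > 1$ is answered in $O(1)$ from a maintained counter, so the loop runs in $O(|V(T_i)|)$.

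Finally I would address resetting state between subtrees, which is where I expect the only real subtlety to lie: a naive clearing of the global bucket array or membership array would cost $\Theta(n)$ per subtree and inflate the bound to $O(n^2)$. To avoid this I would clear only the entries actually touched while building $T_i$, namely the buckets in the processed level range and the marks of the nodes in $V(T_i)$, which is again $O(|V(T_i)|)$ work. Summing the preprocessing and the per-subtree bounds gives $O(n + \sum_{i=1}^{n} |V(T_i)|)$, and since each $T_i$ is non-empty we have $\sum_{i=1}^{n} |V(T_i)| \ge n$, so the running time is $O(\sum_{i=1}^{n} |V(T_i)|)$, as claimed.
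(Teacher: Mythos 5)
Your proof follows essentially the same approach as the paper's: $O(n)$ BFS preprocessing, a level-indexed bucket structure for $L$ together with a node-membership array, the key observation that every node ever entering $L$ also enters $T_i$ (so the while loop does $O(|V(T_i)|)$ constant-time node operations), an $O(1)$ termination test via a maintained counter for $|L|$, and a reset that touches only the entries actually used rather than clearing the global arrays. The one inaccuracy is your claim that drawing $k_i$ from the Poisson distribution takes $O(1)$ time: the standard method (as the paper notes, citing Knuth) takes $O(k_i)$ time, truncated at $n$, but since $k_i \le |V(T_i)|$ this still fits the per-subtree budget and the conclusion is unaffected.
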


\begin{proof}

Rooting $T$ from an arbitrary node, and determining the parent $P(x)$ of each node $x$ in $T$ as well as its level $l(x)$ with respect to the root node, takes $O(n)$ time in total for all nodes using BFS. The set $L$ is represented by an array of lists with length equal to $\max_{x\in T}{l(x)}$. Each index $d$ of $L$ represents a list of unprocessed nodes having $l(x) = d$. The lists in $L$ will be empty initially. If a node $x$ at level $d$ is selected, we add $x$ to the list at index $d$ of $L$ in constant time.
We also need to keep an $n$-dimensional boolean array $B$, which will be comprised of zeros at first, in order to check whether a node already exists in $L$. If a node $x$ is chosen, the element at index $x$ of $B$ will be set to one in constant time. Note that $B$ is a different representation of the set of nodes in $L$. Both $L$ and $B$ are initialized only once at the start of the algorithm after performing the BFS. Since the number of levels is at most $n$, initialization of $L$ and $B$ can be done in time $O(n)$. We represent $T_i$ as a list of nodes, which can be initialized in $O(1)$. 
For each subtree, we generate a random integer $k_i$ by making use of Poisson distribution with mean $\lambda$. Generation of some random integer $x$ from Poisson distribution normally takes $O(x)$ time; starting from zero, the value of $x$ is incremented one by one until the stopping condition is met \cite{knuth1969art}. However, since we need $k_i$, which is the number nodes to be selected, to lie between 1 and $n$, we terminate the process if we reach $n$ before stopping condition is met, and we set $k_i$ to 1 if the process returns a zero value. This way, we only spend $O(k_i)$ time to generate $k_i$. 
Then, at each iteration, $k_i$ random nodes are chosen. To do this in time $O(k_i)$, we can copy the nodes of $T$ into an array, which is done at the beginning only once and hence takes $O(n)$ time in total, and keep a separation index $s$ that separates the selected elements from the ones that have not been selected yet, as explained in the proof of Lemma \ref{growingsubtree-runtime}. Adding a chosen node to $T_i$, $L$ and $B$ can be done in $O(1)$. At the end of $n$ such iterations, a total of $\sum_{i =1 }^{n} k_i$ random choices are made and this is clearly less than or equal to $\sum_{i=1}^{n} |V(T_i)|$. 

Now, it remains to show that generation of $n$ subtrees can also be done in time $O(\sum_{i=1}^{n} |V(T_i)|)$ once $T_i$, $L$ and $B$ are populated with initial randomly selected nodes. Our aim is to construct the minimal subtree of $T$ connecting all the nodes in $\{x_1,\ldots, x_{k_i}\}$. To this end, at every iteration, we add the parents of all nodes of highest level to subtree $T_i$ and replace these nodes by their parents in $L$. The way we store the nodes in $L$ enables us to access unprocessed nodes at a given level in constant time. However, to be able to start with the highest level in constant time initially, we need to know the highest level of the randomly selected $k_i$ nodes, which can be found in $O(k_i)$. While processing some node $x$ at level $d$, we first investigate whether its parent node $P(x)$ has already been included in $L$ by checking index $P(x)$ of $B$ in constant time. If it is one, it means that the parent node has already been included in  $L$ and $T_i$, and we do not do anything. Otherwise, we 
append $P(x)$ to the list at index $(d-1)$ of $L$ and set the corresponding index of $B$ to one. When done with $x$, we remove it from $L$ in $O(1)$ since $x$ is an 
element of a list, and set index $x$ of $B$ to zero, which is again $O(1)$. Recall that since $T_i$ is represented as a list, adding an element to $T_i$ can be done in $O(1)$. Thus, we perform constant-time operations for each node under consideration in the inner for loop. 

At the beginning of the while loop $L$ has $k_i$ isolated nodes. Whenever two nodes in $L$ have a common parent, the cardinality of $L$ decreases by one at the next step. Noting that $|L|$ indicates the number of currently existing connected components, which are to be attached together to reveal the subtree, the while loop to add new nodes to $T_i$ terminates when $|L|=1$; that is, as soon as the minimal subtree has been found. Now, it is enough to notice that during the generation of each subtree $T_i$ using this method, we only consider and add the nodes of $T$ which are in $T_i$, and iterate only through the levels that are contained in $T_i$. In other words, $|L|$ becomes 1 and the while loop stops after exactly when $|V(T_i)|$ nodes are considered. Because we spend constant time for each of the $|V(T_i)|$ nodes, the overall complexity of the operations within the while loop becomes $O(|V(T_i)|)$ for each subtree $T_i$. In order to obtain $O(|V(T_i)|)$ for the entire loop, we need to ensure that termination condition of the while loop can be checked in constant 
time. For this purpose, we keep the number of nodes in $L$ as a variable, incrementing whenever a new node is added and decrementing upon removal of a node, which takes at most $O(|V(T_i)|)$ time. Finally, the arrays $L$ and $B$ should be reset before being passed to the next subtree. We know that $L$ will contain a single node at the end of the while loop, and equivalently a single nonzero element will exist in $B$. When the loop terminates, we will know at which index (level) of $L$ we were finally at. So, we can simply access the final node, set the element in $B$ corresponding to that node to zero, and delete it from $L$, all in constant time. This way, the algorithm from while loop on to the next subtree completes in time $O(|V(T_i)|)$. In total, these operations add to the running time of Algorithm ConnectingNodes a term of $O(\sum_{i=1}^{n} |V(T_i)|)$. \qed

\end{proof}

Lemma \ref{connectingnodes-runtime}, together with Theorem \ref{runningtime}, gives the following:

\begin{corollary}
	Algorithm ChordalGen, using the ConnectingNode method, runs in time $O(\sum_{i=1}^{n} |V(T_i)|)$.
\end{corollary}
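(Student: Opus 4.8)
The plan is to derive this corollary purely by substitution, since both ingredients have already been established. The statement is of the same shape as the preceding GrowingSubtree corollary, and the combination it announces — ``Lemma \ref{connectingnodes-runtime} together with Theorem \ref{runningtime}'' — tells us exactly which two facts to glue together.

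First I would recall that Theorem \ref{runningtime} expresses the total running time of ChordalGen as $O(S + \sum_{i=1}^{n} |V(T_i)|)$, where by definition $O(S)$ is the cost of whatever subroutine is used in Step 2 to produce the $n$ subtrees. When ChordalGen is instantiated with ConnectingNodes, the quantity $S$ is precisely the running time of that subroutine. Next I would invoke Lemma \ref{connectingnodes-runtime}, which bounds the running time of ConnectingNodes by $O(\sum_{i=1}^{n} |V(T_i)|)$, so that $S = O(\sum_{i=1}^{n} |V(T_i)|)$. Plugging this into the bound from Theorem \ref{runningtime} yields
$$O\!\left(S + \sum_{i=1}^{n} |V(T_i)|\right) = O\!\left(\sum_{i=1}^{n} |V(T_i)| + \sum_{i=1}^{n} |V(T_i)|\right) = O\!\left(\sum_{i=1}^{n} |V(T_i)|\right),$$
which is exactly the claimed bound.

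There is essentially no obstacle here: the work has been front-loaded into Theorem \ref{runningtime} and Lemma \ref{connectingnodes-runtime}. The only point I would make sure to verify (rather than prove anew) is that Theorem \ref{runningtime} has already absorbed the $O(n+m)$ cost of Step 3 into the term $\sum_{i=1}^{n} |V(T_i)|$, using Lemma \ref{lem:key} to dominate $m$ and the trivial fact that $n$ nonempty subtrees contribute at least $n$ to the sum. Hence no stray $n$ or $m$ terms survive, and the substitution above is the entire argument; the proof is a single line once the two cited results are in hand.
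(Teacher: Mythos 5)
Your proof is correct and matches the paper exactly: the paper derives this corollary by the same one-line substitution, stating only that Lemma \ref{connectingnodes-runtime} together with Theorem \ref{runningtime} gives the result, i.e.\ plugging $S = O(\sum_{i=1}^{n} |V(T_i)|)$ into the bound $O(S + \sum_{i=1}^{n} |V(T_i)|)$. Your closing remark about the $O(n+m)$ cost of Step 3 being absorbed via Lemma \ref{lem:key} also mirrors the reasoning already carried out inside the proof of Theorem \ref{runningtime}, so nothing further is needed.
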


\medskip

We have thus presented two different methods for generating subtrees of a given tree, both of which result in an algorithm for generating random chordal graphs in time$O(\sum_{i=1}^{n} |V(T_i)|)$. In the next subsection we present yet another subtree generation method, having running time $O(n^2+\sum_{i=1}^{n} |V(T_i)|)$. We include this algorithm for the sake of completeness and better comparison basis in our tests in the next section. This algorithm can for example be used when one is interested in generating chordal graphs with predominantly large maximal cliques as the density grows.

\subsection{Algorithm PrunedTree}
\label{section:removal_of_edges}

The input to this algorithm consists of a tree $T$ on $n$ nodes, an edge deletion fraction $f$, which is a rational number between 0 and 1, and a selection barrier $s$, which is a real number between 0 and 1. 
To generate subtree $T_i$, we randomly select a fraction $f$ of the edges on the tree and remove them. The number of edges to delete, say $l$, is calculated as $\lfloor (n-1)f \rfloor$,  which will leave $l+1$ subtrees in total. We then determine the sizes of the $l+1$ subtrees and store the distinct values. We pick a random size $k_i$ from the set of largest $100(1-s) \%$ of distinct values, and randomly choose a subtree with size $k_i$. We repeat this $n$ times to generate all the subtrees. One should note that we could simply select one connected component (subtree) at random without any preferential treatment; however, parameter $s$ makes it easier to increase the density of the chordal graph by favoring larger components as the value of $s$ advises. So, parameter $s$ is an additional means to tune the edge density of the chordal graph; as its value increases, the size of the subtree to be selected tends to increase too. Increasing the edge deletion fraction $f$, however, tends to decrease the average size of subtrees emerging from deletion of edges.

\medskip 

{\bf Algorithm PrunedTree} 

\smallskip
{\bf Input:}  A tree $T$ on $n$ nodes, edge deletion fraction $f$, and selection barrier $s$

{\bf Output:}  A set of $n$ non-empty subtrees of $T$

\smallskip
{\bf for} $i=1$ {\bf to} $n$ {\bf do}

\quad Create a copy $T'$ of $T$

\quad Select randomly $\lfloor (n-1)f \rfloor$ edges of $T'$ and delete them from $T'$

\quad Determine connected components of $T'$ and their sizes

\quad Select randomly a subtree size $k_i$ from the highest $ 100(1-s) \% $ subtree sizes.

\quad Select a random component of size $k_i$ and choose it as $T_i$

\smallskip
Output $\{T_1, T_2, \ldots, T_n\}$ \\

\begin{lemma}
	\label{prunedtree-runtime}
	The running time of Algorithm PrunedTree is $O(n^2)$.
\end{lemma}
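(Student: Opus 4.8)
The plan is to show that each pass of the outer \textbf{for} loop can be carried out in $O(n)$ time; since the loop runs exactly $n$ times, this yields the claimed $O(n^2)$ bound. I would first note that this bound automatically subsumes the cost of writing out the subtrees: each $T_i$ satisfies $|V(T_i)| \le n$, so $\sum_{i=1}^{n} |V(T_i)| \le n^2$, and there is no need to invoke Lemma~\ref{lem:key} in this proof.

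Next I would walk through the five operations inside one iteration and bound each by $O(n)$. Creating a copy $T'$ of $T$ costs $O(n)$, since a tree on $n$ nodes has exactly $n-1$ edges. To select $\lfloor (n-1)f\rfloor$ distinct edges uniformly at random and delete them, I would reuse the array-with-separation-index technique from the proof of Lemma~\ref{growingsubtree-runtime}: store the edges of $T'$ in an array and repeatedly swap a randomly chosen edge past a separation index, which costs $O(1)$ per edge and $O(n)$ overall; carrying out the deletions then leaves a forest on $n$ nodes. Computing the connected components of this forest together with their sizes is a single BFS/DFS traversal, again $O(n)$, during which I would also record for each component its size and a representative so that components can later be grouped by size.

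The step that needs the most care, and the one I would flag as the main obstacle, is selecting the subtree size $k_i$ from the highest $100(1-s)\%$ of \emph{distinct} component sizes, since a naive approach sorts the sizes and incurs an $O(n\log n)$ factor per iteration, giving $O(n^2\log n)$ in total rather than $O(n^2)$. To keep the iteration linear, I would exploit the fact that every component size is an integer in $[1,n]$ and use a bucketing/counting scheme instead of comparison sorting: allocate an array indexed by size, and during the component scan mark each realized size and bucket the components accordingly. Sweeping this index from $n$ down to $1$ then enumerates the distinct sizes in decreasing order, so I can extract the largest $\lceil (1-s)D\rceil$ of them (where $D$ is the number of distinct sizes) into a list and draw $k_i$ uniformly from that list, all in $O(n)$. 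Finally, choosing a random component of the selected size $k_i$ from its bucket and reading its node set into $T_i$ is $O(n)$, and resetting the auxiliary arrays before the next iteration is likewise $O(n)$.

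Combining these estimates, each iteration of the outer loop runs in $O(n)$ time, and over the $n$ iterations the total running time of Algorithm PrunedTree is $O(n^2)$, as claimed.
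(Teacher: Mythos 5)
Your proposal is correct and follows essentially the same route as the paper's proof: both bound each of the $n$ iterations by $O(n)$, and both handle the size-selection step by bucketing components into an array indexed by size (a counting-sort sweep over $[1,n]$) so that sorting and random selection of a component of a given size avoid any comparison-sort overhead. Your additional detail on random edge deletion via the separation-index trick is a minor elaboration of a step the paper simply attributes to standard $O(n)$ processing.
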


\begin{proof}		
	Creating a copy of $T$, deleting a subset of its edges, and computing the resulting connected components takes $O(n)$ time by standard BFS. Now, we create an array $A$ of size $n$, where each element in $A$ is a linked list. For each connected component of $T'$ of size $t$, we add this component at the end of the list in $A[t]$. Clearly, initializing $A$, and adding all subtrees to appropriate lists takes $O(n)$ time. We also make an additional array $B$ which simply stores the sizes of all subtrees, in sorted order. $B$ can be created in time $O(n)$, using $A$. We use $B$ to find the highest $ 100(1-s) \% $ subtree sizes, by simply using the corresponding last portion of $B$. Random selection of a subtree of size $k_i$ is simply done by picking a subtree from the list $A[k_i]$ in $O(1)$ time. Thus every subtree requires $O(n)$ time to generate.

 Repeating this $n$ times, the overall complexity of PrunedTree algorithm amounts to $O(n^2)$.	
	\qed
\end{proof}

We now obtain the following result using Theorem \ref{runningtime} together with Lemma \ref{prunedtree-runtime}.

\begin{corollary}
	Algorithm ChordalGen, using the PrunedTree method, runs in time $O(n^2+\sum_{i=1}^{n} |V(T_i)|)$.
\end{corollary}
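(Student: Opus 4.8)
The plan is to obtain this corollary as a direct instantiation of Theorem \ref{runningtime}, in which the generic subtree-generation cost $O(S)$ is replaced by the concrete bound just established for PrunedTree. First I would observe that Algorithm PrunedTree is precisely the procedure invoked in Step 2 of Algorithm ChordalGen when the PrunedTree method is selected. Hence the quantity $S$ appearing in the statement of Theorem \ref{runningtime} — defined there as the time required to generate $n$ subtrees of the host tree on $n$ nodes — is exactly the running time analysed in Lemma \ref{prunedtree-runtime}.

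Next I would simply apply Lemma \ref{prunedtree-runtime}, which gives $S = O(n^2)$. Substituting this into the bound $O(S + \sum_{i=1}^{n} |V(T_i)|)$ supplied by Theorem \ref{runningtime} yields an overall running time of $O(n^2 + \sum_{i=1}^{n} |V(T_i)|)$, which is the claimed bound. The argument is therefore a one-line composition of the two preceding results, and no fresh calculation is needed.

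Because the statement is a pure composition of two already-proved facts, there is no real technical obstacle; the only point requiring a moment's care is notational. One must confirm that the ``time to generate $n$ subtrees'' in the hypothesis of Theorem \ref{runningtime} is measured in the same way as the cost bounded by Lemma \ref{prunedtree-runtime}, so that the two contributions add rather than being double-counted or omitted. In particular, the $O(n+m)$ cost of assembling the intersection graph in Step 3 of ChordalGen has already been folded into the $O(\sum_{i=1}^{n} |V(T_i)|)$ term (via Lemma \ref{lem:key}, which forces $\sum_{i=1}^{n} |V(T_i)|$ to dominate $m$, while $\sum_{i=1}^{n} |V(T_i)| \ge n$ trivially). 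Consequently the generation term $O(n^2)$ and the construction term $O(\sum_{i=1}^{n} |V(T_i)|)$ stand as the two separate summands, giving exactly the stated complexity.
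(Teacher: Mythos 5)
Your proposal is correct and matches the paper exactly: the paper obtains this corollary as an immediate consequence of Theorem \ref{runningtime} combined with Lemma \ref{prunedtree-runtime}, substituting $S = O(n^2)$ into the bound $O(S + \sum_{i=1}^{n} |V(T_i)|)$. Your additional remark that the Step~3 construction cost is already absorbed into the $O(\sum_{i=1}^{n} |V(T_i)|)$ term via Lemma \ref{lem:key} is a point the paper handles inside the proof of Theorem \ref{runningtime} itself, so nothing further is needed.
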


Let us conclude this section with a remark which applies to all of the three subtree generation methods.
Algorithm ChordalGen does not guarantee the connectedness of its output graphs, as also revealed by our experimental results in Section \ref{chapter:experiments}. If connectedness is of particular importance and must be achieved, a possible modification to our algorithms can guarantee it without increasing the overall time complexity. 
To this end, we randomly choose one vertex from each connected component of the resulting graph. From each subtree corresponding to the set of vertices selected from the components, we pick one arbitrary node, and we form the last subtree with the union of paths on the host tree that connect these nodes. This way, lastly added vertex $v$ is guaranteed to be linked to at least one vertex from each connected component of $G - v$, and so we ensure the connectedness of the output graph. This
process is very similar to a single iteration of ConnectingNodes subroutine, thus by the proof of Lemma \ref{connectingnodes-runtime}, it takes  $O(\sum_{i=1}^{n} |V(T_i)|)$ time and does not affect the overall complexity of Algorithm ChordalGen. 

\section{Experimental Results}\label{chapter:experiments}

In this section, we give extensive test results to show what kind of chordal graphs are generated by Algorithm ChordalGen. In Tables \ref{tab:GrowingSubtree}-\ref{tab:PrunedTree} we give the experimental results of our presented methods GrowingSubtree, ConnectingNodes and PrunedTree, respectively. We show how the selection of the input parameters affects the number of resulting edges $m$ and connected components (``\# conn. comp.s"). We also present the number of maximal cliques (``\# maximal cliques"), and the minimum, maximum, and mean size for the maximal cliques (``min clique size", ``max clique size", ``mean clique size"), along with their standard deviation (``sd of clique sizes"). For each parameter combination, we performed ten independent runs and report the average values across those ten runs.  For each $n$, we tuned the parameter values in order to approximately achieve some selected average edge density values of 0.01, 0.1, 0.5, and 0.8, where edge density is defined as $\frac{m}{n(n-1)/2}$. We 
made all instances that we present here available at http://www.ie.boun.edu.tr/$\sim$taskin/data/chordal/ where we also offer a broad collection of relatively small-sized chordal graphs on 50 to 500 vertices with varying edge densities.  

Algorithm ChordalGen together with GrowingSubtree is able to output connected chordal graphs unless density is too low, as the results in Table \ref{tab:GrowingSubtree} show. In fact, for average edge density of 0.01, as $n$ increases, the average number of connected components converge to one. If we examine the ``min clique size" column, we see that it is usually one for cases where the average number of connected components is greater than one, suggesting that the reason for obtaining disconnected chordal graphs is largely due to a few isolated vertices and that the dominating rest of the graph is comprised of a connected body. The fact that the starting point of the subtrees is selected uniformly at random and we can directly control the maximum size of them leaves little chance for a set of subtrees not to intersect with any other and so lead to a separate connected component, unless the maximum subtree size $k$ is very small. 

\begin{table}[H]
	\centering
	\caption{Experimental results of Algorithm ChordalGen with GrowingSubtree method}
		\scalebox{0.8}{
\begin{tabular}{ m{1.3cm} m{1.5cm} m{1.5cm} m{2cm} m{1.5cm} m{1.6cm} m{1.6cm} m{1.6cm} m{1.6cm} m{1.6cm} }
	\toprule
	$n$  &  \parbox[t]{2cm}{max\\subtree\\size ($k$)} & \parbox[t]{2cm}{density} & \parbox[t]{2cm}{$m$} & \parbox[t]{2cm}{\#\\ conn.\\comp.s} & \parbox[t]{2cm}{ \# \\maximal\\cliques} & \parbox[t]{2cm}{ min \\clique \\size} & \parbox[t]{2cm}{max\\ clique \\size} & \parbox[t]{2cm}{mean \\clique \\size} & \parbox[t]{2cm}{sd of\\clique \\sizes} \\
		\midrule
		\multirow{4}[2]{*}{1000} 
		& 7   & 0.011 & 5551.4 & 16.7  & 357.1 & 1.0   & 21.6  & 6.1   & 3.4 \\
		& 33  & 0.104 & 51768.5 & 1.0   & 173.0 & 4.8   & 141.5 & 30.7  & 20.4 \\
		& 139  & 0.497 & 248033.5 & 1.0   & 81.3  & 30.6  & 474.3 & 137.9 & 89.2 \\
		& 324 & 0.803 & 400918.7 & 1.0   & 47.5  & 66.8  & 717.4 & 312.2 & 159.5 \\
		\midrule
		\multirow{4}[2]{*}{2500} 
		& 13   & 0.011 & 34605.4 & 2.7   & 678.3 & 1.2   & 54.5  & 11.5  & 6.8 \\
		& 63  & 0.104 & 326287.0 & 1.0   & 300.4 & 9.7   & 349.9 & 61.7  & 45.0 \\
		& 269 & 0.505 & 1577474.1 & 1.0   & 134.3 & 50.2  & 1177.4 & 292.8 & 207.7 \\
		& 635 & 0.806 & 2518595.5 & 1.0   & 83.3  & 132.6 & 1861.8 & 673.1 & 397.4 \\
		\midrule
		\multirow{4}[2]{*}{5000} 
		& 20  & 0.010 & 129763.8 & 1.6   & 1104.6 & 1.6   & 96.5  & 18.1  & 11.4 \\
		& 100  & 0.104 & 1296493.4 & 1.0   & 474.0 & 15.9  & 695.0 & 103.0 & 80.0 \\
		& 450 & 0.498 & 6226843.9 & 1.0   & 205.7 & 74.8  & 2390.8 & 501.6 & 374.2 \\
		& 1097 & 0.804 & 10053952.1 & 1.0   & 124.1 & 202.0 & 3656.7 & 1220.6 & 741.9 \\
		\midrule
		\multirow{4}[2]{*}{10000} 
		& 31  & 0.010 & 502155.4 & 1.0   & 1754.4 & 3.4   & 199.2 & 29.1  & 19.8 \\
		& 169  & 0.107 & 5362219.2 & 1.0   & 709.8 & 22.2  & 1376.0 & 181.6 & 149.6 \\
		& 751 & 0.506 & 25298684.2 & 1.0   & 304.9 & 104.9 & 4687.5 & 894.0 & 711.5 \\
		& 1855 & 0.802 & 40103196.8 & 1.0   & 184.1 & 278.3 & 7445.0 & 2141.8 & 1459.7 \\
		\bottomrule
	\end{tabular}%
	\label{tab:GrowingSubtree}%
}
\end{table}%

Table \ref{tab:ConnectingNodes} reports the outputs of Algorithm ChordalGen using ConnectingNodes for generating subtrees. As the experimental results given in Table \ref{tab:ConnectingNodes} reveal, Algorithm ConnectingNodes should be input very small $\lambda$ values in order to achieve even quite dense graphs. Since even few number of selected nodes may result in large subtrees, which increases the chances of potential intersections with other subtrees and hence the number of edges in the output graph, the number of selected nodes has to be restricted via low values for $\lambda$, which is the main ingredient in setting the cardinality of node selection. Because of this, the selected node set, and hence the subtree, commonly be comprised of a single node, especially in graphs with low density. Therefore, when there are many single-node subtrees, intersections are not very likely. Thus, we observe many isolated vertices in the generated chordal graphs, as implied by the high number of connected components 
and minimum size of one in maximal cliques (see ``min clique size" column) in ConnectingNodes method. 

\begin{table}[H]
	\centering
	\caption{Experimental results of Algorithm ChordalGen with ConnectingNodes method}
	\scalebox{0.8}{
		\begin{tabular}{ m{1.3cm} m{1.4cm} m{1.5cm} m{2.2cm} m{1.5cm} m{1.6cm} m{1.6cm} m{1.6cm} m{1.6cm} m{1.6cm} }
		\toprule
		$n$  &  \parbox[t]{2cm}{$\lambda$} & \parbox[t]{2cm}{density} & \parbox[t]{2cm}{$m$} & \parbox[t]{2cm}{\#\\ conn.\\comp.s} & \parbox[t]{2cm}{ \# \\maximal\\cliques} & \parbox[t]{2cm}{ min \\clique \\size} & \parbox[t]{2cm}{max\\ clique \\size} & \parbox[t]{2cm}{mean \\clique \\size} & \parbox[t]{2cm}{sd of\\clique \\sizes} \\
		\midrule
		\multirow{4}[2]{*}{1000} 
		& 0.5   & 0.011 & 5455.4 & 349.0 & 597.0 & 1.0   & 75.8  & 3.0   & 5.5 \\
		& 1.2   & 0.100 & 49805.1 & 121.4 & 495.3 & 1.0   & 266.5 & 8.0   & 23.1 \\
		& 2.7   & 0.507 & 253074.6 & 8.6   & 238.7 & 1.0   & 627.0 & 30.7  & 87.4 \\
		& 4.1   & 0.804 & 401708.6 & 1.8   & 96.3  & 1.6   & 835.4 & 81.5  & 183.9 \\
		\midrule
		\multirow{4}[2]{*}{2500} 
		& 0.6   & 0.010 & 31559.8 & 849.2 & 1475.8 & 1.0   & 194.6 & 3.5   & 10.1 \\
		& 1.2   & 0.101 & 314818.0 & 298.5 & 1215.3 & 1.0   & 657.5 & 9.6   & 40.7 \\
		& 2.7   & 0.503 & 1571946.8 & 27.9  & 594.0 & 1.0   & 1620.6 & 36.0  & 142.1 \\
		& 4.1   & 0.800 & 2498034.2 & 3.1   & 226.5 & 1.2   & 2074.8 & 102.8 & 315.3 \\
		\midrule
		\multirow{4}[2]{*}{5000} 
		& 0.6   & 0.010 & 127700.5 & 1693.5 & 2960.1 & 1.0   & 395.8 & 3.8   & 15.2 \\
		& 1.2   & 0.103 & 1290089.2 & 578.3 & 2409.8 & 1.0   & 1396.8 & 10.6  & 57.9 \\
		& 2.7   & 0.505 & 6308093.4 & 44.4  & 1148.6 & 1.0   & 3217.5 & 41.0  & 211.0 \\
		& 4.1   & 0.805 & 10060406.5 & 4.7   & 435.7 & 1.0   & 4261.4 & 119.4 & 460.6 \\
		\midrule
		\multirow{4}[2]{*}{10000} 
		& 0.6   & 0.010 & 501760.2 & 3365.6 & 5901.9 & 1.0   & 806.2 & 4.0   & 21.8 \\
		& 1.2   & 0.100 & 5022899.1 & 1180.4 & 4794.0 & 1.0   & 2703.4 & 11.7  & 83.8 \\
		& 2.7   & 0.502 & 25114409.8 & 97.3  & 2300.6 & 1.0   & 6355.1 & 44.0  & 291.5 \\
		& 4.1   & 0.803 & 40154270.6 & 9.3   & 852.9 & 1.0   & 8484.9 & 136.2 & 682.4 \\
		\bottomrule
	\end{tabular}%
	\label{tab:ConnectingNodes}%
}
\end{table}%

Table \ref{tab:PrunedTree} presents the experimental results of Algorithm ChordalGen when used with PrunedTree method. The two columns ``edge del. fr. ($f$)" and ``selection barrier ($s$)" correspond to input parameters that PrunedTree takes as input, whose role are explained in Section \ref{section:removal_of_edges}. Here, we observe that for density values of 0.1, 0.5, and 0.8, the output graphs are predominantly connected. As in the previous two methods, minimum size of maximal cliques in case of 0.01 density is one, implying that the main cause of the number of connected components is probably a small group of isolated vertices.  

\begin{table}[H]
	\caption{Experimental results of Algorithm ChordalGen with PrunedTree method}
	\scalebox{0.8}{
	\begin{tabular}{ m{1.3cm} m{1.5cm} m{1.5cm} m{1.5cm} m{2cm} m{1.5cm} m{1.6cm} m{1.6cm} m{1.6cm} m{1.6cm} m{1.6cm} }
		\toprule
		$n$  &  \parbox[t]{2cm}{edge\\ del. fr.\\($f$)}  & \parbox[t]{2cm}{selection\\ barrier\\($s$)} & \parbox[t]{2cm}{density} & \parbox[t]{2cm}{$m$} & \parbox[t]{2cm}{\#\\ conn.\\comp.s} & \parbox[t]{2cm}{ \# \\maximal\\cliques} & \parbox[t]{2cm}{ min \\clique \\size} & \parbox[t]{2cm}{max\\ clique \\size} & \parbox[t]{2cm}{mean \\clique \\size} & \parbox[t]{2cm}{sd of\\clique \\sizes} \\

		\midrule
		\multirow{4}[2]{*}{1000} 
		& 0.950 & 0.35  & 0.011 & 5619.3 & 45.8  & 324.5 & 1.0   & 30.4  & 5.5   & 4.1 \\
		& 0.700 & 0.60  & 0.104 & 51765.9 & 1.0   & 99.9  & 4.2   & 133.6 & 35.9  & 25.2 \\
		& 0.140 & 0.85  & 0.497 & 248172.1 & 1.0   & 50.2  & 193.0 & 337.1 & 278.3 & 35.2 \\
		& 0.100 & 0.93  & 0.806 & 402349.8 & 1.0   & 36.5  & 397.7 & 621.6 & 530.8 & 53.6 \\
		\midrule
		\multirow{4}[2]{*}{2500} 
		& 0.950 & 0.70  & 0.011 & 34013.9 & 28.3  & 542.5 & 1.0   & 72.3  & 9.5   & 8.8 \\
		& 0.700 & 0.70  & 0.101 & 316270.6 & 1.0   & 150.4 & 5.5   & 335.0 & 70.8  & 58.0 \\
		& 0.120 & 0.90  & 0.507 & 1584225.2 & 1.0   & 66.2  & 492.0 & 844.4 & 703.2 & 84.4 \\
		& 0.077 & 0.95  & 0.801 & 2500840.1 & 1.0   & 56.5  & 996.2 & 1530.1 & 1304.7 & 123.0 \\
		\midrule
		\multirow{4}[2]{*}{5000}
		& 0.950 & 0.77  & 0.010 & 130970.2 & 21.7  & 833.8 & 1.0   & 177.7 & 14.0  & 15.2 \\
		& 0.700 & 0.75  & 0.097 & 1216527.9 & 1.0   & 202.6 & 4.8   & 655.2 & 117.0 & 106.5 \\
		& 0.080 & 0.91  & 0.495 & 6182739.6 & 1.0   & 101.0 & 1083.4 & 1571.7 & 1395.7 & 109.0 \\
		& 0.045 & 0.96  & 0.801 & 10004264.5 & 1.0   & 99.9  & 2269.3 & 2955.8 & 2672.9 & 146.1 \\
		\midrule
		\multirow{4}[2]{*}{10000} 
		& 0.900 & 0.50  & 0.010 & 479501.2 & 22.5  & 1394.8 & 1.0   & 286.6 & 20.6  & 24.5 \\
		& 0.700 & 0.81  & 0.102 & 5076707.1 & 1.0   & 260.0 & 7.2   & 1415.8 & 204.5 & 206.0 \\
		& 0.060 & 0.93  & 0.507 & 25357868.2 & 1.0   & 143.5 & 2359.9 & 3176.8 & 2882.4 & 177.7 \\
		& 0.031 & 0.96  & 0.793 & 39653114.8 & 1.0   & 157.7 & 4705.0 & 5709.5 & 5319.0 & 198.9 \\
		\bottomrule
	\end{tabular}%
	\label{tab:PrunedTree}%
}
\end{table}%

We want to compare our results to the results showing the kind of chordal graphs that are generated by Alg 2a \cite{markenzon2008two}. Note, however that, the results given by \cite{markenzon2008two}  only contain graphs on 10000 vertices, with varying number of edges. Most metrics presented in \cite{markenzon2008two} are about the number of edges. When it comes to the maximal cliques, they present only the average maximum clique size over the generated graphs for each edge density. Comparing these to our numbers we see that graphs corresponding to edge densities 0.01, 0.1, 0.5, and 0.8 of Alg 2a have average maximum clique sizes 727, 2847, 6875, and 8760, respectively. As can be seen from Tables \ref{tab:GrowingSubtree}-\ref{tab:PrunedTree}, these numbers are quite higher than the corresponding numbers for the graphs generated by Algorithm ChordalGen. In fact, studying the numbers more carefully, we can conclude that the maximum clique of a graph generated by Alg 2a contains almost all the edges of the 
graph. In the case of density 0.01, such a clique contains more than half of the edges, whereas in the case of higher densities, the largest clique contains more than 80, 94, and 95 percent of the edges, respectively. Thus, there does not seem to be an even distribution of the sizes of maximal cliques of graphs generated by Alg 2a.

As we mentioned in Section \ref{chapter:literature}, we also implemented Alg 1 \cite{andreou2005generating}, but without imposing a limit on the maximum degree of the output graph, because no detail was given about how the method avoids exceeding a given maximum degree in \cite{andreou2005generating}. 
In Table \ref{tab:Alg1} we give results of Alg 1 analogous to Tables \ref{tab:GrowingSubtree}--\ref{tab:PrunedTree} for 1000, 2500, and 5000 vertices. In order to obtain results for Table \ref{tab:Alg1} comparable to those given in Tables \ref{tab:GrowingSubtree}-\ref{tab:PrunedTree}, we aimed to have approximately the same edge density values. For this purpose, when determining the number of new neighbors of a vertex at each step in Alg 1, we multiplied the total number of candidate vertices with a coefficient between 0 and 1, which we call \textit{upper bound coefficient}.
A running time analysis for this algorithm has not been given in \cite{andreou2005generating}. 
With our implementation, this algorithm turned out to be too slow to allow testing graphs on 10000 vertices in a reasonable amount of time. However, already from the obtained numbers, we can reach a conclusion for Alg 1 similar to that on Alg 2a. Observe that the maximum clique sizes obtained for 5000 vertices by Alg 1, are comparable to the maximum clique sizes obtained for 10000 vertices by Algorithm ChordalGen. Hence, like Alg 2a, also Alg 1 seems to generate graphs with few big maximal cliques. 

As can be seen in Table \ref{tab:Alg1}, Alg 1 outputs connected chordal graphs for the selected set of average edge density values and number of vertices. The minimum size of the maximal cliques did not show much variation throughout our experiments and almost always turned out to be two. The consistency in this measure may be an indication of the lack of potential to produce a diverse range of maximal clique sizes.

\begin{table}[H]
	\centering
	\caption{Experimental results of our implementation of Alg 1 \cite{andreou2005generating}}
	\scalebox{0.8}{
	\begin{tabular}{ m{1.3cm} m{1.7cm} m{1.5cm} m{2cm} m{1.5cm} m{1.6cm} m{1.6cm} m{1.6cm} m{1.6cm} m{1.6cm} }
		\toprule
		$n$  &  \parbox[t]{2cm}{upper\\ bound\\coef.} & \parbox[t]{2cm}{density} & \parbox[t]{2cm}{$m$} & \parbox[t]{2cm}{\#\\conn.\\comp.s} & \parbox[t]{2cm}{ \# \\maximal \\cliques} & \parbox[t]{2cm}{min \\clique \\size} & \parbox[t]{2cm}{max\\clique\\size} & \parbox[t]{2cm}{mean\\clique\\size} & \parbox[t]{2cm}{sd of\\clique\\sizes} \\
		\midrule
		\multirow{4}[2]{*}{1000} 
		& 0.00130 & 0.011 & 5659.3 & 1.0   & 933.3 & 2.0   & 58.0  & 4.9   & 9.4 \\
		& 0.00300 & 0.100 & 49717.1 & 1.0   & 753.3 & 2.0   & 219.3 & 28.4  & 60.0 \\
		& 0.01100 & 0.506 & 252864.4 & 1.0   & 401.5 & 2.0   & 562.5 & 190.1 & 233.6 \\
		& 0.03500 & 0.805 & 401945.6 & 1.0   & 191.6 & 2.4   & 788.6 & 399.4 & 342.6 \\
		\midrule
		\multirow{4}[2]{*}{2500} 
		& 0.00053 & 0.011 & 33201.8 & 1.0   & 2320.8 & 2.0   & 154.5 & 8.8   & 26.1 \\
		& 0.00120 & 0.100 & 313001.8 & 1.0   & 1882.3 & 2.0   & 548.8 & 69.3  & 159.3 \\
		& 0.00440 & 0.502 & 1568857.4 & 1.0   & 1006.5 & 2.0   & 1400.6 & 462.7 & 593.3 \\
		& 0.01400 & 0.799 & 2495447.1 & 1.0   & 469.8 & 2.0   & 1975.9 & 936.8 & 888.2 \\
		\midrule
		\multirow{4}[2]{*}{5000} 
		& 0.00027 & 0.011 & 133829.0 & 1.0   & 4629.0 & 2.0   & 313.9 & 15.9  & 56.2 \\
		& 0.00062 & 0.107 & 1339169.7 & 1.0   & 3717.9 & 2.0   & 1136.5 & 147.0 & 342.7 \\
		& 0.00220 & 0.494 & 6179872.9 & 1.0   & 2032.5 & 2.0   & 2774.2 & 897.7 & 1180.8 \\
		& 0.00700 & 0.801 & 10011146.3 & 1.0   & 939.0 & 2.0   & 3950.0 & 1901.6 & 1794.5 \\
		\bottomrule
	\end{tabular}%
	\label{tab:Alg1}%
}
\end{table}%

\begin{figure}[H]
	\centering
	\begin{subfigure}[b]{0.99\textwidth}
		\includegraphics[width=\textwidth]{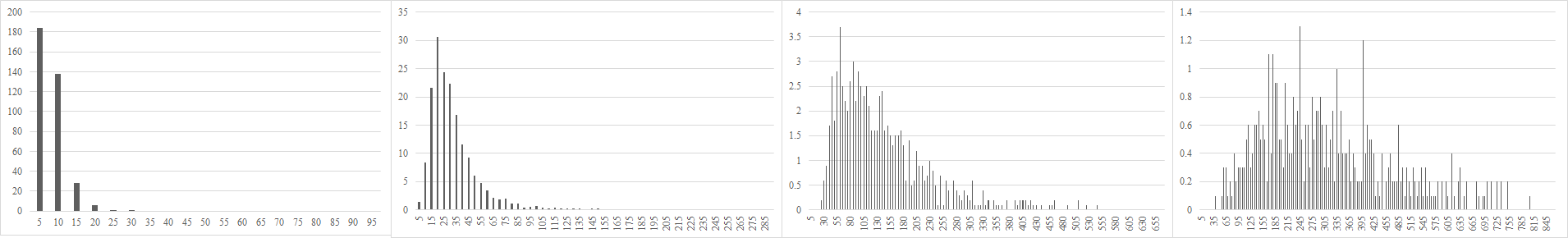}
		\caption{Results from Algorithm ChordalGen with GrowingSubtree method}
		\label{fig:kabakcicegi_1000_4lu}
	\end{subfigure}

	\begin{subfigure}[b]{0.99\textwidth}
		\includegraphics[width=\textwidth]{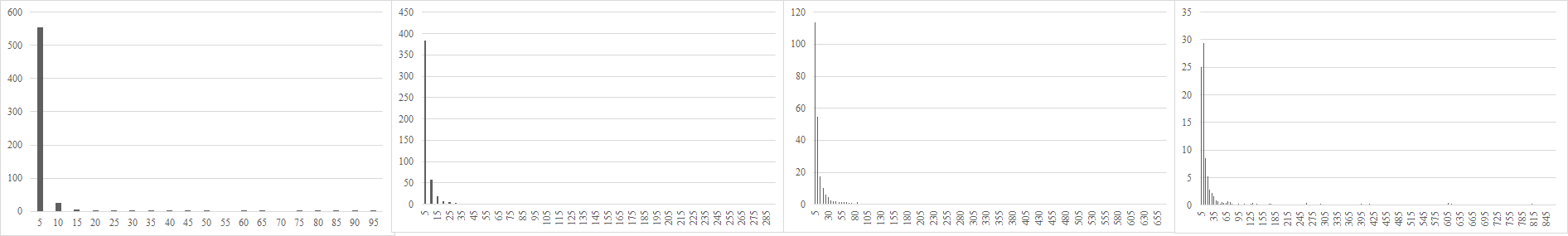}
		\caption{Results from Algorithm ChordalGen with ConnectingNodes method}
		\label{fig:nodeDeletion_1000_4lu}
	\end{subfigure}
	
	\begin{subfigure}[b]{0.99\textwidth}
		\includegraphics[width=\textwidth]{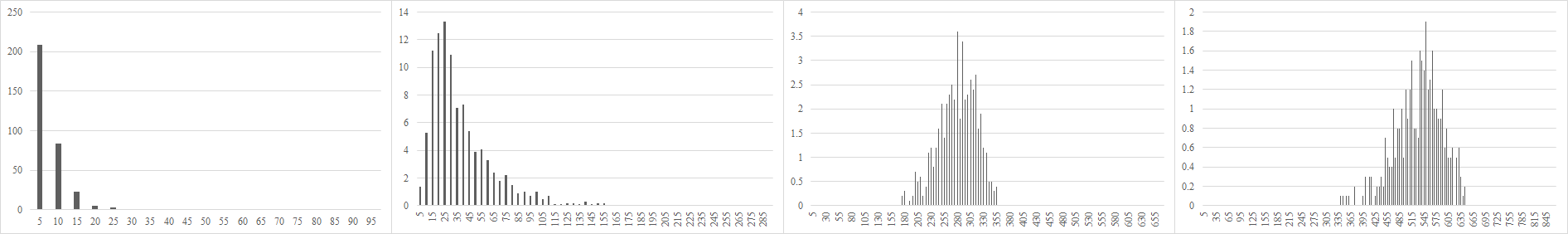}
  		\caption{ Results from Algorithm ChordalGen with PrunedTree method}
		\label{fig:edgeDeletion_1000_4lu}
	\end{subfigure}
	
	\begin{subfigure}[b]{0.99\textwidth}
		\includegraphics[width=\textwidth]{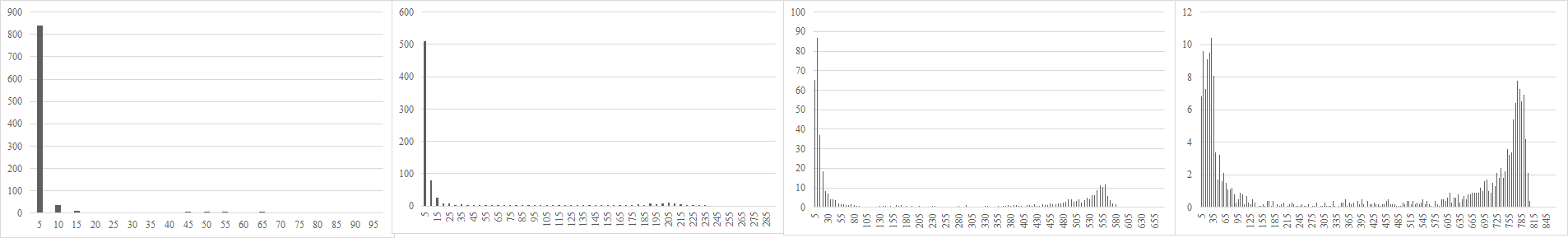}
		\caption{Results from our implementation of Alg 1  \cite{andreou2005generating}}
		\label{fig:peoBased_1000_4lu}
	\end{subfigure}
	\caption{Histograms of maximal clique sizes for $ n = 1000 $ and average edge densities 0.01, 0.1, 0.5, and 0.8 (from left to right)}\label{fig:histograms_1000}
\end{figure}

In our next set of experimental results we investigate how the sizes of the maximal cliques are distributed. 
Figures \ref{fig:histograms_1000}-\ref{fig:histograms_10000} show the average number of maximal cliques across ten independent runs in intervals of width five, for 1000, 2500, 5000, and 10000 vertices and varying edge densities. These figures consist of four subfigures, except Figure \ref{fig:histograms_10000} which contains only the first three, and each subfigure is comprised of four histograms corresponding to four different average edge density values. The first three sub-figures on the top row show the results from Algorithm ChordalGen combined with each one of the three subtree generation methods presented, and the bottom row shows results of our implementation of Alg 1 \cite{andreou2005generating}. For a given $n$ and average density value, the ranges of $ x $-axes are kept the same in order to render the histograms comparable. The $y$-axes, however, have different ranges because maximum frequencies in histograms may vary drastically.

\begin{figure}[H]
	\centering
	\begin{subfigure}[b]{0.99\textwidth}
		\includegraphics[width=\textwidth]{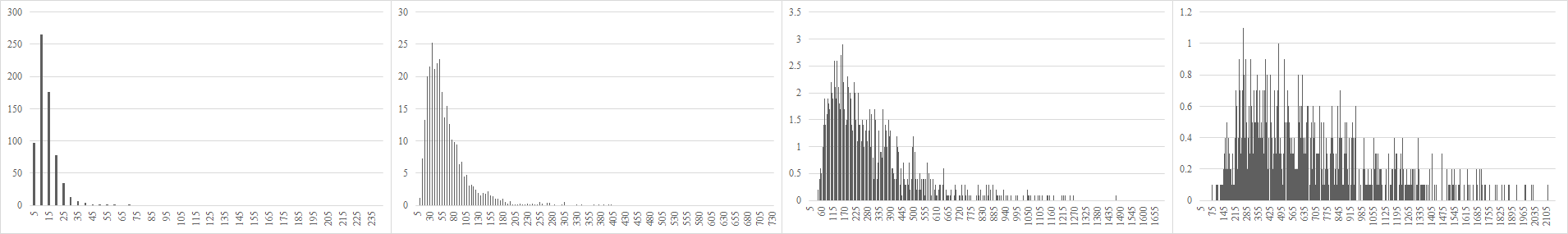}
		\caption{Results from Algorithm ChordalGen with GrowingSubtree method}
		\label{fig:kabakcicegi_2500_4lu}
	\end{subfigure}
	
	\begin{subfigure}[b]{0.99\textwidth}
		\includegraphics[width=\textwidth]{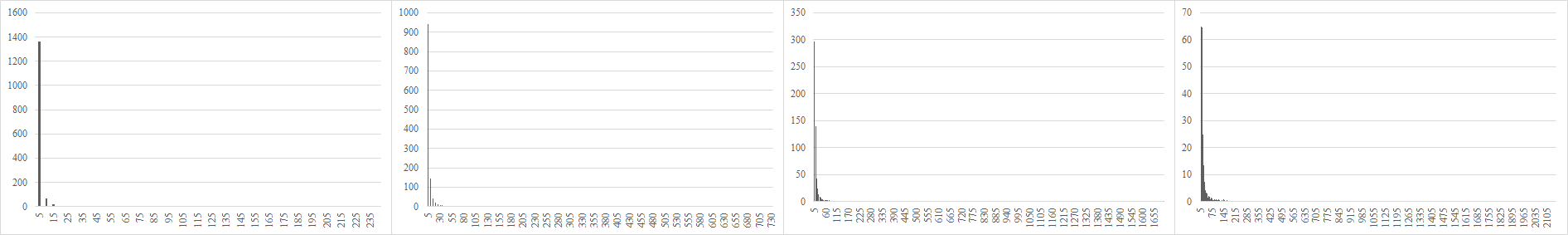}
		\caption{Results from Algorithm ChordalGen with ConnectingNodes method}
		\label{fig:nodeDeletion_2500_4lu}
	\end{subfigure}
	
	\begin{subfigure}[b]{0.99\textwidth}
		\includegraphics[width=\textwidth]{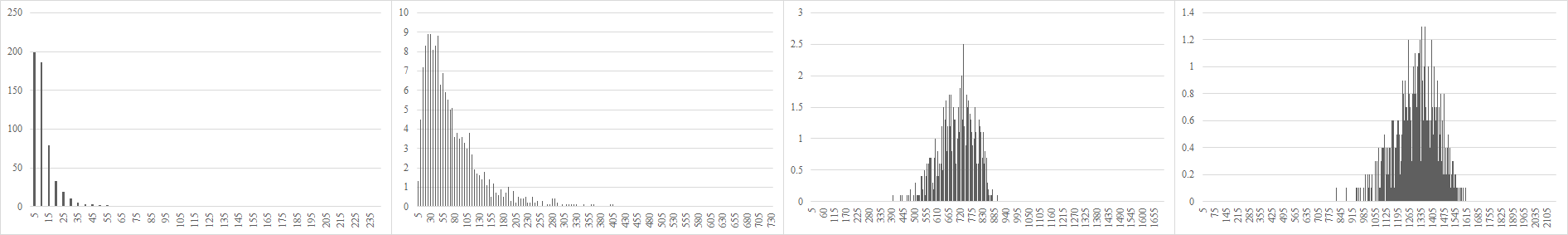}
		\caption{Results from Algorithm ChordalGen with PrunedTree method}
		\label{fig:edgeDeletion_2500_4lu}
	\end{subfigure}
	
	\begin{subfigure}[b]{0.99\textwidth}
		\includegraphics[width=\textwidth]{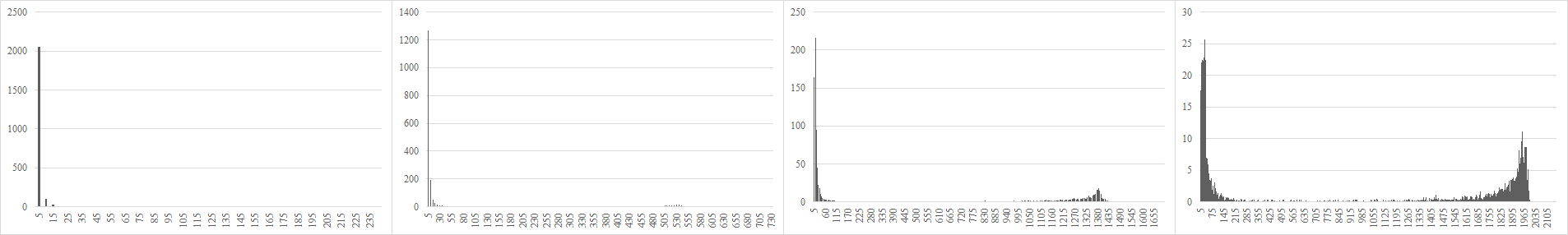}
		\caption{Results from our implementation of Alg 1  \cite{andreou2005generating}}
		\label{fig:peoBased_2500_4lu}
	\end{subfigure}
	\caption{Histograms of maximal clique sizes for $ n = 2500 $ and average edge densities 0.01, 0.1, 0.5, and 0.8 (from left to right)}\label{fig:histograms_2500}
\end{figure}

The sizes of maximal cliques of graphs produced by GrowingSubtree method are dispersed fairly over the range, which becomes more noticeable with the increase in edge densities (as we proceed to the right). Frequencies do not show any obvious bias toward some portion of its domain, which may be considered as an indicator of the diversity of the chordal graphs produced, which is a desired characteristic of a random chordal graph generator. In ConnectingNodes method, however, the vast majority of cliques have size ten or less. The frequencies of larger cliques are barely noticeable compared to the dominant small-sized set. As the graphs become denser, frequencies of relatively larger cliques start to become visible too, but general behaviour remains the same. So, if chordal graphs with predominantly very small clique sizes are sought, ConnectingNodes method can be preferred. In PrunedTree method, the mode of the distribution shifts with the increase in edge densities and the sizes of cliques become clustered 
around some moderate value over the given range. As for Alg 1, the vast majority of maximal cliques of its output graphs have sizes of 2 to 15 for graphs with low densities of 0.01 and 0.1. With the increase in edge densities, frequencies of large-size maximal cliques become visible relative to the dominant small clique frequencies; however, all but the extremes of the range is barely used regardless of selection of $n$ and edge density. 

\begin{figure}[H]
	\centering
	\begin{subfigure}[b]{0.99\textwidth}
		\includegraphics[width=\textwidth]{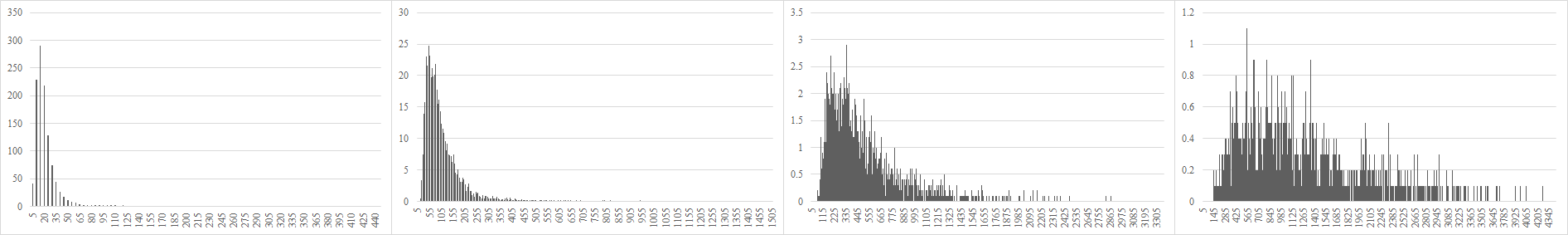}
		\caption{Results from Algorithm ChordalGen with GrowingSubtree method}
		\label{fig:kabakcicegi_5000_4lu}
	\end{subfigure}
	
	\begin{subfigure}[b]{0.99\textwidth}
		\includegraphics[width=\textwidth]{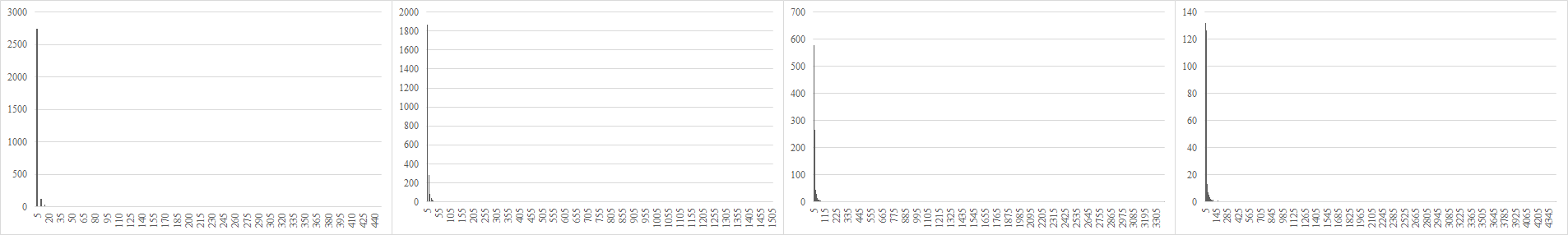}
		\caption{Results from Algorithm ChordalGen with ConnectingNodes method}
		\label{fig:nodeDeletion_5000_4lu}
	\end{subfigure}
	
	\begin{subfigure}[b]{0.99\textwidth}
		\includegraphics[width=\textwidth]{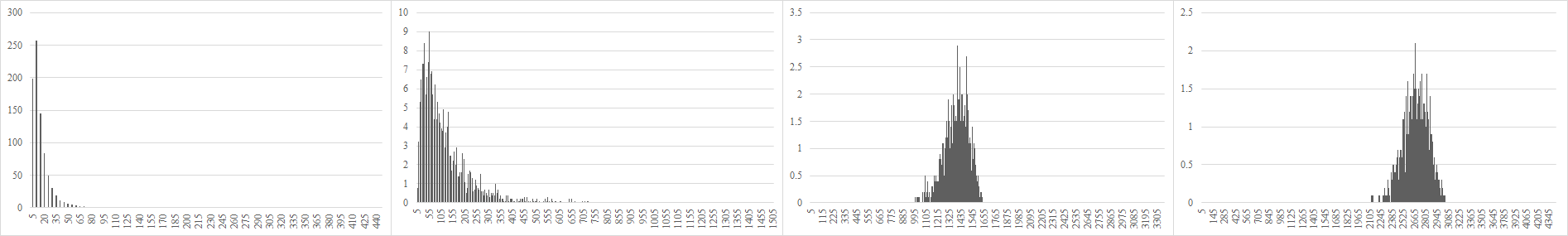}
		\caption{Results from Algorithm ChordalGen with PrunedTree method}
		\label{fig:edgeDeletion_5000_4lu}
	\end{subfigure}
	
	\begin{subfigure}[b]{0.99\textwidth}
		\includegraphics[width=\textwidth]{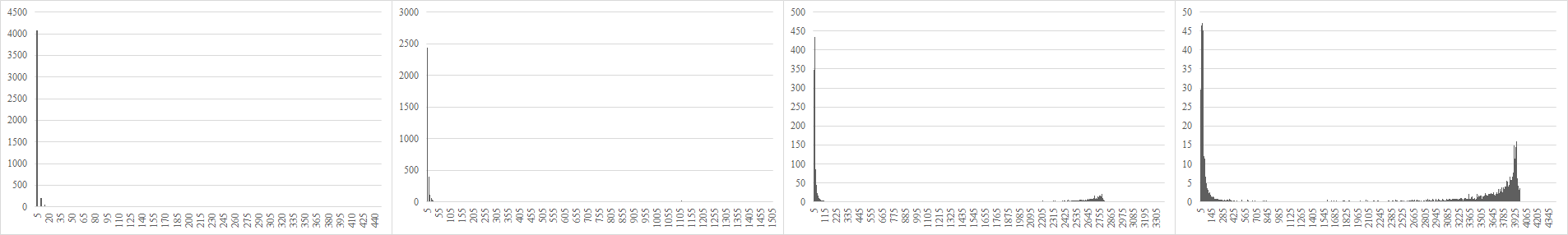}
		\caption{Results from our implementation of Alg 1  \cite{andreou2005generating}}
		\label{fig:peoBased_5000_4lu}
	\end{subfigure}
	\caption{Histograms of maximal clique sizes for $ n = 5000 $ and average edge densities 0.01, 0.1, 0.5, and 0.8 (from left to right)}\label{fig:histograms_5000}
\end{figure}

\begin{figure}[H]
	\centering
	\begin{subfigure}[b]{0.99\textwidth}
		\includegraphics[width=\textwidth]{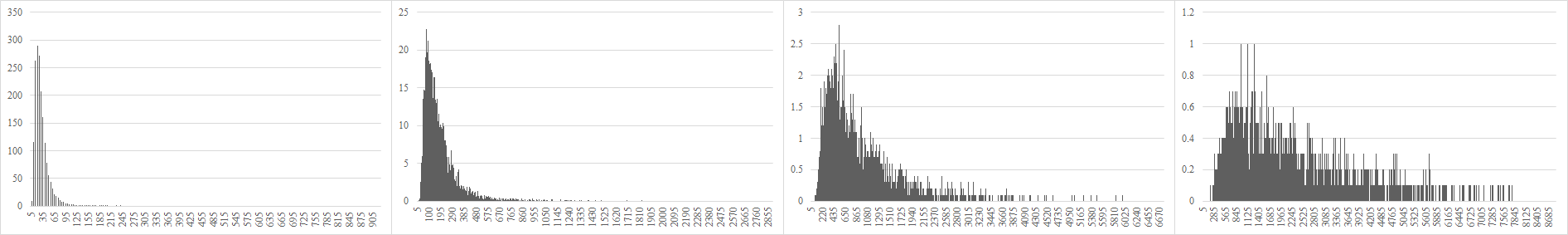}
		\caption{Results from Algorithm ChordalGen with GrowingSubtree method}
		\label{fig:kabakcicegi_10000_4lu}
	\end{subfigure}
	
	\begin{subfigure}[b]{0.99\textwidth}
		\includegraphics[width=\textwidth]{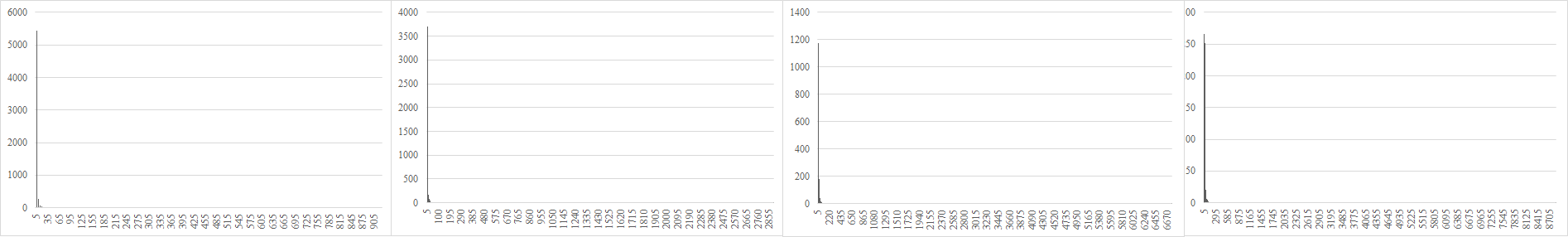}
		\caption{Results from Algorithm ChordalGen with ConnectingNodes method}
		\label{fig:nodeDeletion_10000_4lu}
	\end{subfigure}
	
	\begin{subfigure}[b]{0.99\textwidth}
		\includegraphics[width=\textwidth]{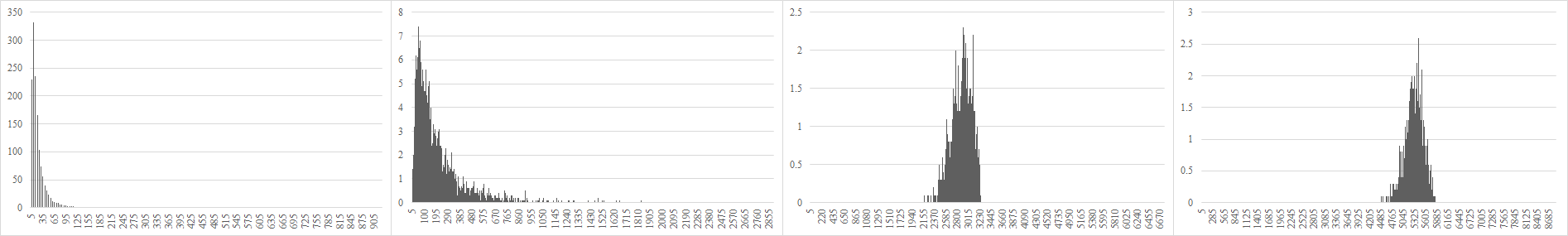}
		\caption{Results from Algorithm ChordalGen with PrunedTree method}
		\label{fig:edgeDeletion_10000_4lu}
	\end{subfigure}

	\caption{Histograms of maximal clique sizes for $ n = 10000 $ and average edge densities 0.01, 0.1, 0.5, and 0.8 (from left to right)}\label{fig:histograms_10000}
\end{figure}

\section{Conclusion}\label{chapter:conc}

To the best of our knowledge, Algorithm ChordalGen is the first algorithm for random chordal graph generation based directly on subtree intersection characterization. It is very general and flexible in the sense that many different methods for subtree generation can be plugged in. 

The three different subtree generation methods presented here each offer output graphs of different structures. As far as the distribution of maximal clique sizes are concerned, ConnectingNodes and PrunedTree methods yield graphs of somewhat more specific structure in the sense that the sizes of maximal cliques are always clustered in the very initial portion in ConnectingNodes method, and in PrunedTree method in the initial part of the range for low densities, in middle portions for moderate to high densities. GrowingSubtree method, though, in addition to its advantageous time complexity, generates the most varied chordal graphs compared to both existing methods and to other two of our suggested methods. Depending on the context and structural needs for the output graph, Algorithm ChordalGen can be used with one of the three subroutines chosen suitably in order to produce chordal graphs of varying size and density. For those who prefer to test various algorithms on chordal graphs without focusing on 
the generation procedure, we offer all instances used in this paper as well as a broad collection of relatively small-sized chordal graphs on 50 to 500 vertices with varying edge densities at http://www.ie.boun.edu.tr/$\sim$taskin/data/chordal/.

Last but not least, our work gives rise to an open question about chordal graphs which has not been addressed to date to the best of our knowledge: what is the worst case time complexity of a chordal graph generation algorithm which produces the entire set of nodes of all subtrees in a subtree intersection model? In the current paper, we give a lower bound, namely $\Omega(mn^{1/4})$, on the time complexity. However, the exact worst case time complexity remains unknown.

\section*{Acknowledgements}

The authors are indebted to Mordechai Shalom for pointing out an error in the running time analysis of Algorithm ChordalGen that appeared in \cite{CIAC2017} by providing the example in the proof of Lemma \ref{lem:key}. This work was initiated while Heggernes was visiting Bo\u{g}azi\c{c}i University, and it is supported by the Research Council of Norway. \c{S}eker, Ekim and Ta\c{s}k{\i}n are supported by the Bo\u{g}azi\c{c}i University Research Fund, Grant 11765, whose support is greatly acknowledged. Ekim is also grateful to Turkish Academy of Science for the GEBIP award which supported this work.

\end{document}